\newcommand*\bigcdot{\mathpalette\bigcdot@{.5}}
\newcommand*\bigcdot@[2]{\mathbin{\vcenter{\hbox{\scalebox{#2}{$\m@th#1\bullet$}}}}}
\newcommand\notsotiny{\@setfontsize\notsotiny\@vipt\@viipt}
\newcommand{\C}{{\mathbb{C}}}
\newcommand{\norm}[1]{\left\|{#1}\right\|}
\def\01{\{0,1\}}
\newcommand{\eps}{\varepsilon}
\theoremstyle{plain}
\newtheorem{theorem}{Theorem}
\newtheorem{lemma}[theorem]{Lemma}
\newtheorem{result}[theorem]{Result}
\definecolor{applegreen}{rgb}{0.0, 0.5, 0.0}
\DeclareMathOperator{\op}{op} 
\newcommand{\poly}{\mbox{\rm poly}}
\DeclareMathOperator{\Tr}{Tr}
\DeclareMathOperator{\tr}{tr}
\newcommand{\beq}{\begin{equation}}
\newcommand{\eeq}{\end{equation}}
\newcommand{\beqn}{\begin{equation*}}
\newcommand{\eeqn}{\end{equation*}}
\newcommand{\beqr}{\begin{eqnarray}}
\newcommand{\eeqr}{\end{eqnarray}}
\newcommand{\beqrn}{\begin{eqnarray*}}
\newcommand{\eeqrn}{\end{eqnarray*}}
\newcommand{\bmline}{\begin{multline}}
\newcommand{\emline}{\end{multline}}
\newcommand{\bmlinen}{\begin{multline*}}
\newcommand{\emlinen}{\end{multline*}}
\theoremstyle{plain}
\theoremstyle{definition}
\theoremstyle{remark}
\newtheorem{remark}[theorem]{Remark}
\renewenvironment{proof}[1][]{
    \begin{trivlist}
    \item[\hspace{\labelsep}{\em\noindent Proof#1:\/}]}
    {{\hfill$\Box$}
    \end{trivlist}
}
\newtheoremstyle{named}{}{}{\itshape}{}{\bfseries}{.}{.5em}{\thmnote{#3}}
\theoremstyle{named}
\author{
 Andreas Bluhm\\
  \texttt{Univ.\ Grenoble Alpes, CNRS}\\\texttt{Grenoble INP, LIG}\\ \texttt{ 38000 Grenoble, France}
  \and
  Matthias C. Caro\\
  \texttt{University of Warwick}\\
  \texttt{Coventry CV4 7AL, United Kingdom}
  \and 
  Francisco Escudero Gutiérrez\\
  \texttt{Centrum Wiskunde \& Informatica, CWI}\\
  \texttt{1098 XG, Amsterdam, The Netherlands}
  \and
  Aadil Oufkir\\
  \texttt{Institute for Quantum Information}\\
  \texttt{
  RWTH Aachen University}\\\texttt{
  Aachen, Germany}
  \and 
  Cambyse Rouzé\\
  \texttt{Inria, Télécom Paris - LTCI}\\ 
\texttt{  Institut Polytechnique de Paris}\\
\texttt{91120 Palaiseau, France}
}
\begin{document}

\title{Certifying and learning quantum Ising Hamiltonians}

\date{}

\maketitle

\begin{abstract}
In this work, we study the problems of certifying and learning quantum Ising Hamiltonians. Our main contributions are as follows:

\textbf{Certification of Ising Hamiltonians.} We show that certifying an Ising Hamiltonian in normalized Frobenius norm via access to its time-evolution operator requires only $\widetilde O(1/\eps)$ time evolution. This matches the Heisenberg-scaling lower bound of $\Omega(1/\eps)$ up to logarithmic factors. To our knowledge, this is the first nearly-optimal algorithm for testing a Hamiltonian property. A key ingredient in our analysis is the Bonami Lemma from Fourier analysis.

\textbf{Learning Ising Gibbs states.} We design an algorithm for learning Ising Gibbs states in trace norm that is sample-efficient in all parameters. In contrast, previous approaches learned the underlying Hamiltonian (which implies learning the Gibbs state) but suffered from exponential sample complexity in the inverse temperature.

\textbf{Certification of Ising Gibbs states.} We give an algorithm for certifying Ising Gibbs states in trace norm that is both sample and time-efficient, thereby solving a question posed by Anshu (Harvard Data Science Review, 2022).

Finally, we extend our results on learning and certification of Gibbs states to general $k$-local Hamiltonians for any constant $k$.

% \fnote{Sketch of the abstract:    In this work, we consider the problems of testing and learning quantum Ising Hamiltonians. We prove: 
% \begin{enumerate}
%     \item When having access to the time evolution operator, one can certify Ising Hamiltonians with $\widetilde{O}(1/\eps)$ total evolution time. TBA, This is the first optimal property tester for quantum Hamiltonians. We prove it using the hypercontractive inequality of Fourier analysis. 
%     \item We show that an Ising Gibbs state can be learned in trace distance to accuracy $\eps$ with $O(\poly(n,\beta,\eps^{-1}))$ copies of the state. This exponentially improves the dependence on $\beta$ with respect to previous approaches, but it requires time exponential in $n.$
%     \item We show that an Ising Gibbs state can be time-efficiently certified in trace distance to accuracy $\eps$ with $O(n^2\beta^2/\eps^4)$ copies of the state. This answers a question by Anshu \cite{anshu2022some}.
% \end{enumerate}
% We also generalize 2. and 3. to Gibbs states of $k$-local Hamiltonians. }
\end{abstract}

\section{Introduction}
    With the rapid development of quantum hardware, the design of protocols to characterize its dynamics and its behavior at thermal equilibrium %\mnote{What do we mean by the thermal behavior of quantum hardware?} \fnote{ I mean their behaviour at theremal equilibrium. I have added equilibrium, is it better now?}
    has become increasingly more important \cite{bravyi2024high,liu2025robust}. Both aspects are ultimately governed by the system Hamiltonian, which has motivated an extensive literature on Hamiltonian learning \cite{Silva2011Practical,holzapfel2015scalable,Zubida2021Optimal,haah2022optimal,yu2023robust,Dutkiewicz.2023,huang2023heisenberg,li2023heisenberglimited,möbus2023dissipationenabled,franca2024efficient,Gu2022Practical,zhao2025learning,hu2025ansatz,anshu2021sample,rouze2023learning,onorati2023efficient,bakshi2023learning,ma2024learning,arunachalam2025testing,caro2023learning,castaneda2023hamiltonian} and, more recently, Hamiltonian testing \cite{aharonov2022quantum,laborde2022quantum,she2022unitary,bluhm2024hamiltonian,arunachalam2025testing,kallaugher2025hamiltonian,gao2025quantum,sinha2025improvedhamiltonianlearningsparsity}. %\mnote{Add \cite{} to the list of references?}

    A physically especially relevant class of quantum Hamiltonians is the family of Ising Hamiltonians, which can be written as a linear combination of Hamiltonians that act non-trivially on at most 2 particles \cite{ising1925beitrag}.\footnote{We abuse nomenclature here by identifying 2-local Hamiltonians with the subclass of quantum Ising Hamiltonians. %\mnote{Naive question: In what sense is this an abuse of nomenclature? Are Ising Hamiltonians usually only a subclass of 2-local Hamiltonians?}\fnote{People usually speak about the quantum Ising model as Hamiltonians with $2$-local terms involving only $Z$s and a $1$-local term involving only $X$s.}
    } 
    Both classical and quantum Ising models have been extensively studied (see for instance \cite{mccoy2012importance,daskalakis2019testing} for the classical case) since, despite their apparent simplicity, they are of fundamental importance in classical as well as quantum physics. For instance, they exhibit non-trivial quantum phase transitions \cite{dutta2010quantum,suzuki2012quantum}.
    Moreover, such Hamiltonians with 2-particle interactions have played a prominent role in quantum complexity theory \cite{oliveira2005complexity, kempe2006complexity, bravyi2017complexity}, with the corresponding $2$-local Hamiltonian problem proven to be QMA-complete; and they are known to be universal for quantum simulation \cite{cubitt2016complexity, cubitt2018universal}.
    % As Ising Hamiltonians are among the simplest models that exhibit non-trivial \mnote{Maybe we can specify in what way it is non-trivial? Are we thinking about phase transitions, frustration,...?}\fnote{Good point, can some of you fill this? I got this sentence from the paper of testing Ising models, I think. So if no one can add here, I can go back to that paper and add smt from there.} behavior, they have played an important role in quantum physics \cite{dutta2010quantum,suzuki2012quantum} and quantum complexity theory \cite{oliveira2005complexity,cubitt2016complexity,bravyi2017complexity}, and their classical counterparts have been extensively studied \cite{mccoy2012importance,daskalakis2019testing}.

    In this work, we propose algorithms to learn and to certify quantum Ising Hamiltonians, i.e., to test whether an unknown Ising Hamiltonian is equal to or far from a given target Ising Hamiltonian. In particular, when given access to the Hamiltonian through the time-evolution operator, we show that $\widetilde O(1/\eps)$ evolution time suffices for certification, yielding, to the best of our knowledge, the first optimal result in Hamiltonian property testing. For learning thermal states of Ising Hamiltonians, we give, to our knowledge, the first algorithm that is sample-efficient in all relevant parameters. Furthermore, for certifying such states, we give the first algorithm that is both sample- and time-efficient in all relevant parameters.
    %\mnote{Maybe also one sentence about the testing thermal states result.}\fnote{True! Done}

    \subsection{Results and Technical Overview}
    We will consider $n$-qubit Hamiltonians $H$, and for the rest of the introduction, we will assume that they are $2$-local. As such, their expansion in the Pauli basis is simply
    \begin{equation*}
        H=\sum_{P\in \{I,X,Y,Z\}^{\otimes n}:\, |P|\leq 2}h_P P\, ,
    \end{equation*}
    where $|P|$ is the number of sites where the Pauli string differs from identity. As two Hamiltonians that only differ in a multiple of the identity induce the same dynamics and thermal states, we assume without loss of generality that $h_{I^{\otimes n}}=\Tr[H]/2^n=0.$
    
    \subsubsection{Certification via access to the dynamics}
    If a quantum system is governed by a Hamiltonian $H$, then, according to the Schr\"odinger equation, its dynamics are determined by the unitary time evolution operator $U_H(t)=e^{-itH}$. By this, we mean that if the (mixed) state describing the system at time $0$ is $\rho,$ at time $t$ the state will have evolved to $U_H(t)\rho U_H^\dagger (t)$. Thus, a natural access model for Hamiltonians is to perform \emph{experiments} of the following kind: prepare a state $\rho,$ apply $U(t_1)$---that is, make a query to $U_H(t_1)$, which in a lab can be implemented by letting the system evolve for time $t_1$---, apply a unitary operator $V_1$ independent of $H$, query $U_H(t_2)$, apply a unitary operator $V_2$ independent of $H$, query $U_H(t_3)$, $\ldots$, and finally measure. There are several figures of merit to be optimized when performing a computational task in this access model. The one commonly considered the most important is the \emph{total evolution time}, which is the sum of all the times $t_i$ at which the algorithm queries $U_H(t_i).$ 

    As in prior work \cite{she2022unitary,bluhm2024hamiltonian,arunachalam2025testing,kallaugher2025hamiltonian,gao2025quantum}, we will assume that the Hamiltonians have bounded operator norm, $\norm{H}_{\op}\leq 1$, and we will consider the normalized Frobenius norm, given by $$\norm{H-H'}_{\bar F}=\sqrt{\Tr[(H-H')^2]/2^n},$$ as the distance in the space of Hamiltonians. 
    The normalized Frobenius norm is an average-case distance: if the normalized Frobenius norm between two Hamiltonians is small, then the expected values of observables measured on the two states generated by applying the time evolution of each Hamiltonian to a Haar-random state will be close \cite[Section 7.2]{ma2024learning}.
    
    Now, we are ready to state our first result (see \cref{theo:IsingCertificationTimeEvolution} for a formal and more detailed statement) on certifying Ising Hamiltonians from time evolution access.
    \begin{result}\label{result:IsingCertificationTimeEvolution}
        Let $H$ and $H_0$ be two $n$-qubit $2$-local Hamiltonians with $\norm{H}_{\op},\norm{H_0}_{\op}\leq 1$. Assume that $H_0$ is known in advance. Then, there is an algorithm with access to the time evolution of $H$ that only uses $\widetilde O(1/\eps)$ total evolution time, and, with high probability, determines whether $\norm{H-H_0}_{\bar F}\leq \eps$ or $\norm{H-H_0}_{\bar F}\geq 12\eps.$
    \end{result}

    \cref{result:IsingCertificationTimeEvolution} is optimal up to logarithmic factors, because $\Omega(1/\eps)$ evolution time is required to distinguish $H=\eps X$ from $H=-\eps X$ (see, for instance, \cite{kallaugher2025hamiltonian}). Several previous works considered testing Hamiltonian properties from time evolution access \cite{bluhm2024hamiltonian,arunachalam2025testing,kallaugher2025hamiltonian,gao2025quantum}, but the closest-to-optimal result for any Hamiltonian property testing task up to now was the $O(1/\eps^2)$ time evolution upper bound for testing locality given in \cite{kallaugher2025hamiltonian}, which is quadratically worse than the best lower bound $\Omega(1/\eps)$. 
    Thus, \cref{result:IsingCertificationTimeEvolution} is the first optimal (up to logarithmic factors) algorithm for testing a property of quantum Hamiltonians. Furthermore, we substantially improve upon the $\widetilde O(n^3/\eps)$-evolution-time algorithm for certifying Ising Hamiltonians that can be obtained as a special case of the $\widetilde O(s^{3/2}/\eps)$-evolution-time algorithm for certifying Hamiltonians supported on at most $s$ Pauli operators given in \cite[Theorem 5.5]{gao2025quantum}. In a concurrent work, a $O(1/\eps^2)$ evolution-time algorithm is given for the case where $H$ is an arbitrary Hamiltonian and $H_0=0$ \cite{sinha2025improvedhamiltonianlearningsparsity}. Compared to this, our result constitutes a quadratic improvement when $H$ is promised to be an Ising Hamiltonian.
    %\mnote{Should we also compare to the result you can get from \cite{sinha2025improvedhamiltonianlearningsparsity} here?}
    %\mnote{Just to double-check: Their result is for certifying more general Hamiltonians, but here we evaluate it for the 2-local case?}\fnote{Yes, true. I have rewritten it to make their result not look worse than it is.} 
    %Finally, we highlight that the only structural assumption necessary in \cref{result:IsingCertificationTimeEvolution} is 2-locality.
    %This is in contrast to previous works that assumed stronger requirements, such as bounded degree in the interaction graph of the Hamiltonian \cite{huang2023heisenberg,bakshi2024structure}. 
    %\mnote{Re the last sentence: The Hamiltonian testing papers don't make such assumptions. Are we trying to say here that the only previous works that achieve optimal results in some sense make such stronger assumptions?}\fnote{There is no previous testing work achieving optimal complexity. It is true that now that you say that sentence looks odd. Maybe, for the reasons that you point out, we can delete it. I guess that the comparison makes sense with respect to prior work on learning, but not on testing.}
    %\mnote{Agree. We are somehow comparing to prior work on optimal bounds for Hamiltonian learning, but I'm not sure whether we want this comparison here.}

    The proof of \cref{result:IsingCertificationTimeEvolution} relies on a novel application of the Bonami Lemma from Fourier analysis \cite{bonami1970etude,montanaro2008quantum}. We start by noting that we may assume query access to the time evolution operator of $\Delta H=H-H_0$ thanks to Trotterization, which allows us to approximate $e^{-it\Delta H}$ up to arbitrarily small error by making queries to $e^{-itH}$ and $e^{-itH_0}$.  We continue by considering the Taylor expansion of the time evolution operator and taking the trace, yielding
    \begin{equation*}
        \frac{\Tr[U_{\Delta H}(t)]}{2^n}=\underbrace{\frac{\Tr[I^{\otimes n}]}{2^n}}_{=1}+\underbrace{\frac{\Tr[-it\Delta H]}{2^n}}_{=0}-\frac{1}{2}\frac{\Tr{[(t\Delta H)^2]}}{2^n}+\sum_{l=3}^\infty \frac{1}{l!}\frac{\Tr{[(it\Delta H)^l]}}{2^n}\, .
    \end{equation*}
    In the above expression, we recognize two quantities: $\Tr[U_{\Delta H}(t)]/2^n$ is the Pauli coefficient $u_{I^{\otimes n}}$ of $U_{\Delta H}(t)$ corresponding to $I^{\otimes n}$, and $\Tr{[(t\Delta H)^2]}/2^n$ corresponds to $\norm{t\Delta H}_{\bar F}$. Hence, we arrive at 
    \begin{equation*}
        u_{I^{\otimes n}}=1-\frac{1}{2}\norm{t\Delta H}_{\bar F}^2+\underbrace{\sum_{l=3}^\infty \frac{1}{l!}\frac{\Tr{[(it\Delta H)^l]}}{2^n}}_{(*)}\, .
    \end{equation*}
    Assume for a moment that the error produced by the third term summand $(*)$ on the right-hand side was not there. In that case, we would have that if $t=1/(12\eps)$, then 
    \begin{align*}
        \norm{\Delta H}_{\bar F}\leq \eps\quad &\implies\quad\quad |u_{I^{\otimes n}}|\geq \frac{287}{288},\\
         \norm{\Delta H}_{\bar F}\geq 12\eps\quad &\implies\quad\quad |u_{I^{\otimes n}}|\leq \frac{144}{288}.
    \end{align*}
    In that case, it would suffice to estimate $|u_{I^{\otimes n}}|$ up to error $1/288$ to perform Hamiltonian certification. Such an estimation can be done with $O(1)$ queries to $U_{\Delta H}(t)$ (by performing Pauli sampling, or without memory using \cref{lem:memorylessPaulisampling} below), which would result in an algorithm for certification with $O(1)t=O(1/\eps)$ total evolution time, as desired. Thus, it remains to find a tool that allows to control the term $(*)$ even for long time scale $t=\Theta(1/\eps)$. That is exactly where the Bonami Lemma comes into play, allowing us to control higher-order moments with the second moment. More precisely, it states that
    \begin{equation*}
        \left(\frac{\Tr{[|\Delta H|^l]}}{2^n}\right)^{1/l}\leq l\left(\frac{\Tr{[(\Delta H)^2]}}{2^n}\right)^{1/2}=l\norm{\Delta H}_{\bar F}.
    \end{equation*}
    Using this, the term $(*)$ becomes negligible compared to $\norm{t\Delta H}_{\bar F}^2$, which permits us to reproduce the errorless approach above for the case of Ising Hamiltonians. 

    \subsubsection{Learning thermal states}
    There is plethora of results about learning quantum Hamiltonians from access to the associated Gibbs states \cite{anshu2021sample, haah2022optimal, rouze2023learning, onorati2023efficient, bakshi2023learning, Gu2022Practical,chen2025learning}, which, as noted in \cite[Remark 18]{anshu2021sample}, implies learning the Gibbs state itself. 
    However, this Hamiltonian learning-based approach to to the problem of Gibbs state learning inherits a $\Omega(e^{\beta})$ lower bound on the number of copies of the state \cite[Theorem 1.2]{haah2022optimal}.
    %the lower bound $\Omega(e^{\beta})$ copies of the state for Hamiltonian learning carries over through this approach to the problem of Gibbs state learning \cite[Theorem 1.2]{haah2022optimal}. 
    Here, we circumvent this caveat and obtain a learning algorithm for Gibbs states that is sample-efficient with respect to every parameter (see \cref{theo:GibbsStateLearning} for a formal statement). 
    \begin{result}\label{result:learningGibbsStates}
        Let $\rho_H(\beta)$ be the Gibbs state of an unknown $n$-qubit Ising Hamiltonian $H$ at temperature $\beta$ with $|h_P|\leq 1$ for every $P$. Then, there is an algorithm that, with probability at least $0.9,$ $\eps$-learns $\rho_H(\beta)$ in trace norm using only $\widetilde O(n^4\beta^2/\eps^4)$ single copies of the state. 
    \end{result}
    \cref{result:learningGibbsStates} can be generalized to $k$-local Hamiltonians, with $\widetilde O(n^{2k})$ sample-complexity instead of $O(n^2).$ Thus, in the case of $\beta=\poly(n)$ and $k=O(1)$, our algorithm achieves Gibbs state tomography with exponential speedup over general state tomography, which requires $\Theta(4^n)$ copies of the state \cite{o2016efficient,haah2017sample}. Notably, our result, in contrast to all the aforementioned prior works, only requires $k$-locality of the Hamiltonian, and no further assumptions (such as every qubit being acted on by a constant number of Pauli operators) are made. Sadly, our algorithm achieving \cref{result:learningGibbsStates} is not time-efficient, similarly to the first algorithm for learning quantum Hamiltonians from Gibbs states \cite{anshu2021sample}.

    The time-inefficiency is intrinsic to the $\eps$-covering net argument underlying the proof of \cref{result:learningGibbsStates}. The proof starts by establishing the following inequality, which is a consequence of Pinsker's inequality (see \cref{lem:dSKLGibbsStates} for a proof):
    \begin{equation}\label{eq:dsklintro}
        \norm{\rho_H(\beta)-\rho_{H'}(\beta)}_{\tr}
        \leq \sqrt{2 \beta \Tr[(\rho(\beta)-\rho'(\beta))(H'-H)]}
        =O(\beta n^2\max_{P:  |P|\le 2} |h_P-h_P'|)
    \end{equation}  
    %\textcolor{red}{AO : \cref{eq:trnormforGibbs0} goes better with the following explanation}
    for every pair of Ising Hamiltonians $H, H'$. This bound ensures that the set $$\mathcal S_\eta=\{\rho_H(\beta):\, H\in\mathcal H_\eta\}$$ of Gibbs states, where
    $$\mathcal H_\eta=\{H:H\text{ Ising Hamiltonian with }h_P\in \eta\mathbb Z\cap [-1,1]\ \forall P\},$$
    is an $\eps$-covering net of the set of Ising Gibbs states when taking $\eta$ of the order $\eps/(\beta n^2)$. Next, we note that the observables $H-H'$ for $H,H'\in\mathcal H_\eta$ are sums of $2$-local Pauli strings. Hence, via classical shadows \cite{huang2020predicting} (see \cref{theo:Shadows}), we can simultaneously obtain accurate estimates $\Delta_{H,H'}$ for all $\Tr[\rho(H-H')]$ in a sample-efficient manner, where $\rho$ is the state to be learned. If these estimates were exact and the state belonged to the net, then by \cref{eq:dsklintro} one would be able to identify the state. The rest of the proof consists of showing that, even if the state does not belong to the net and with error in the estimates, the state
    \begin{equation*}
        \rho'=\text{argmin}_{\rho'\in\mathcal S_\eta}\max_{H,H'\in\mathcal H_\eta}|\Delta_{H,H'}-\Tr[\rho'(H-H')]|
    \end{equation*}
    satisfies $\norm{\rho-\rho'}_{\tr}\leq \eps$ with high probability.
    
    \subsubsection{Certifying thermal states}
    We also show that quantum state certification of Ising Gibbs states can be made sample and time-efficient with respect to all parameters, resolving a question by Anshu \cite[Section 2]{anshu2022some} (see \cref{theo:certifyingGibbsStates} for a formal statement).

    \begin{result}\label{result:certifyingGibbsStates}
        Let $\rho_H(\beta)$ and $\rho_{H_0}(\beta)$ be the Gibbs states of an $n$-qubit Ising Hamiltonian $H$ and $H_0$ at temperature $\beta$ with $|h_P|,|(h_0)_P|\leq 1$ for every $P$. Then, there is an algorithm that, with probability at least $0.9,$ decides whether $\rho_{H}(\beta)=\rho_{H_0}(\beta)$ or $\norm{\rho_{H}(\beta)-\rho_{H_0}(\beta)}_{\tr}\geq \eps$ using only $\widetilde O(n^4\beta^2/\eps^4)$ single copies of the states. 
    \end{result}
    \cref{result:certifyingGibbsStates} can be generalized to $k$-local Hamiltonians, with $\widetilde O(n^{2k})$ sample-complexity instead of $O(n^2).$ Thus, in the case of $\beta=\poly(n)$ and $k=O(1)$, we have shown an exponential speedup for Gibbs state certification over general state certification, which requires $\Theta(2^n)$ copies of the state \cite{odonnell2015quantum, buadescu2019quantum}. Furthermore, the algorithm behind \cref{result:certifyingGibbsStates} is time-efficient in every parameter (in contrast with \cref{result:learningGibbsStates}). The proof of \cref{result:certifyingGibbsStates} is based on an inequality of the kind of \cref{eq:dsklintro}, which we expect to be useful in other scenarios, and which has seen applications in the classical literature~\cite{santhanam2012information,daskalakis2019testing}. 
    
    \subsection{Discussion and Outlook}

    In this work, motivated by the importance of quantum Ising Hamiltonians to various areas of quantum science, we have explored the tasks of certifying and learning these Hamiltonians. 
    First, we have given an algorithm for Ising Hamiltonian certification with optimal (up to logarithmic factors) total evolution time, thus providing the to our knowledge first optimal bound for any Hamiltonian property testing task in the time-evolution access model.
    Next, we have shifted our focus from the Hamiltonians themselves to the associated Gibbs states. For both learning and certification, this change of perspective allowed us to develop fully sample-efficient---and, in the case of certification, even time-efficient---algorithms.
    This in particular overcomes a known exponential-in-$\beta$ lower bound on learning Hamiltonians from access to copies of the Gibbs state, thus (re-)positioning Gibbs state learning and testing as tasks of independent interest alongside Hamiltonian learning and testing.
    
    % \mnote{Suggestion: Let's add one paragraph that recaps what we did and why it's important. I think we can emphasize:
    % \begin{enumerate}
    %     \item We have the first optimal result for any Hamiltonian property testing task.
    %     \item Conceptually, we propose that, in parallel to Hamiltonian learning and testing, also Gibbs state learning and testing are tasks of interest. For both learning and testing, we show that shifting the focus from the Hamiltonian to the Gibbs state allows for fully sample-efficient---and in the case of testing even time-efficient---algorithms.
    %     \begin{itemize}
    %         \item We already know that there is no algortihm for Hamiltonian learning from Gibbs states that scales polynomially in $\beta$, so here we have something rigorous to point out to demonstrate that our change of perspective is important. For Hamiltonian testing from Gibbs states, there is no such result in the literature, but we have the argument by Joe and Shivam. Do we want to include something along these lines to underscore that the change of perspective to Gibbs state testing is important? Or maybe including a result which we only have a half-baked proof for is too risky.
    %     \end{itemize}
    % \end{enumerate}
    % }
    
    We conclude this introduction by posing three open questions arising from our results:
    \begin{enumerate}
        \item Our nearly-optimal algorithm of \cref{result:IsingCertificationTimeEvolution} for certifying Ising models via access to the time evolution operator is the only one among our results that does not immediately generalize to $k$-local Hamiltonians for $k>2$. That is, our proof, which is based on the Bonami Lemma, breaks down for $k>2$ (see \cref{rem:wheretheproofbreaks}). Thus, it would be interesting to see whether this difference between $k=2$ and $k>2$ is fundamental or merely an artifact of our techniques. In particular, one may ask: Is it possible to certify $k$-local Hamiltonians with $\widetilde O(1/\eps)$ total evolution time for any constant $k$? For instance, can one employ tools developed to establish universality of two-qubit interactions for Hamiltonian simulation \cite{cubitt2018universal} to reduce the $k$-local to the $2$-local case?

        \item The seminal result of learning Hamiltonians via access to the Gibbs state of \cite{anshu2021sample} was only sample-efficient (with respect to $n$), and it was made time-efficient in a series of follow-up works \cite{haah2022optimal,bakshi2023learning}. Similarly, our \cref{result:learningGibbsStates} is, to our knowledge, the first algorithm for learning Gibbs states that is sample-efficient in all parameters. It is thus natural to wonder: Is there an algorithm for learning Gibbs states of Ising Hamiltonians that is both sample- and time-efficient in every parameter?

        \item \cref{result:certifyingGibbsStates} is already efficient in both sample and time complexity, but we lack a matching lower bound. Even for its classical counter-part \cite{daskalakis2019testing}, the precise complexity of Ising Gibbs states seems to be unknown.
        Thus, we ask: What is the optimal sample-complexity of certifying Ising Gibbs states?
    \end{enumerate}
    \paragraph{Acknowledgements.} A.B. was supported by the ANR project PraQPV, grant number ANR-24-CE47-3023. F.E.G. was supported by the European union’s Horizon 2020 research and innovation programme under the Marie Sk{\l}odowska-Curie grant agreement no. 945045, and by the NWO Gravitation project NETWORKS under grant no. 024.002.003. C.R.~is
supported by France 2030 under the French National Research Agency award number ``ANR-22-
PNCQ-0002''.

\section{Preliminaries}
We start by introducing some notation. $I,X,Y$ and $Z$ are the 1-qubit Pauli matrices, and a tensor product of these matrices is called a Pauli string.  Any matrix $A$ acting on $n$ qubits is a matrix of $(\C^{2\times 2})^{\otimes n}$. Such a matrix can be expressed as a linear combination of Pauli strings via its Pauli expansion $A=\sum_{P\in\{I,X,Y,Z\}^{\otimes n}}a_PP.$ Here, $a_P$ are the Pauli coefficients and they are determined by $$a_P=\frac{1}{2^n} \Tr[PA].$$ 
A Pauli string is called $k$-local if it acts as identity in all but at most $k$ qubits. The number of $k$-local Pauli strings is at most 
\begin{equation}\label{eq:numberofklocalPaulis}
    100n^k,
\end{equation}
because
\begin{align*}
        \sum_{l=0}^k 3^l\binom{n}{l}\leq \left\{\begin{array}{ll}
             (k+1)3^k\left(en/k\right)^k\leq 100 n^k             & \text{if } k<n/2 \\ 
            4^n\leq 20 n^{n/2}\leq 20n^k & \text{if }k\geq n/2
        \end{array}\right.,
\end{align*}
where we have used that $(3e/k)^k(k+1)<100$ and $4^n\leq 20 n^{n/2}$ for every $n,k\in\mathbb N$. Given a matrix $A$ acting on $n$ qubits, $\norm{A}_{\op}$ denotes the usual operator norm, i.e., the largest singular value of $A;$ $\norm{A}_{\tr}$ is the trace norm, i.e., the sum of the singular values of $A$; and $\norm{A}_{\bar F}=\Tr[A^\dagger A]/2^n$ is the normalized Frobenius norm. The Pauli strings are an orthonormal basis with respect to the inner product $\langle A,B\rangle=\Tr[A^{\dagger}B]/2^n.$ In particular, Parseval's identity states that
\begin{equation*}
    \norm{A}_{\bar F}=\sqrt{\sum_{P\in\{I,X,Y,Z\}^{\otimes n}}|a_P|^2}.
\end{equation*}
A more general version of Parseval's identity is Plancherel's identity, which states that 
\begin{equation*}
  \langle A,B\rangle\equiv   \frac{\Tr[A^\dagger B]}{2^n}=\sum_{P\in\{I,X,Y,Z\}^{\otimes n}}\bar{a}_Pb_P\, ,
\end{equation*}
where for $z\in\C$, $\bar z$ denotes the complex conjugate of $z.$ We use $\widetilde \Omega(\cdot)$ and $\widetilde O(\cdot)$ to hide polylogarithmic factors of the quantities inside the parentheses.

\subsection{Hamiltonians}
An $n$-qubit Hamiltonian is a self-adjoint matrix acting on $n$ qubits. In particular, a matrix $A$ is a Hamiltonian if and only if $a_P\in \mathbb R$ for every $P\in\{I,X,Y,Z\}^{\otimes n}$. A Hamiltonian $H$ is $k$-local if $h_P=0$ for every $P=P_1\otimes \dots \otimes P_n$ such that $|P|:=|\{i\in [n]:\ P_i\neq I\}|>k.$ Throughout this work, we will use the terms 2-local Hamiltonian and Ising Hamiltonian interchangeably. We will assume that Hamiltonians are traceless, meaning that $h_{I^{\otimes n}}=\Tr[H]/2^n$=0. This is without loss of generality, because two Hamiltonians that only differ in a multiple of identity determine the same time evolution operators and the same Gibbs states. 

\subsubsection{Access via time evolution operator} 
Hamiltonians govern the dynamics of (closed) quantum systems according to the Schr\"odinger equation. In particular, if a quantum system governed by a time-independent Hamiltonian $H$ and the state describing the system at time $0$ is $\rho,$ 
at time $t$ the state will have evolved to $U_H(t)\rho U_H^\dagger (t)$, where $U_H(t)=\exp(-itH)$ is the time evolution operator of $H$ at time $t$. 

Thus, a natural access model for Hamiltonians is to perform \emph{experiments} of the following kind: prepare a state $\rho,$ apply $U_H(t_1)$---that is, make a query to $U_H(t_1)$, which in a lab can be implemented by letting the system evolve for time $t_1$---, apply a unitary operator $V_1$ independent of $H$, query $U_H(t_2)$, apply a unitary operator $V_2$ independent of $H$, query $U_H(t_2)$,$\dots$ and finally measure. 
In this access model, there are different potentially relevant figures of merit. The one usually considered as the most important is the \emph{total evolution time}, which is the sum of all times $t_i$ at which the algorithm queries $U_H(t)$. Other figures of merit that we will also keep track of are the \emph{number of experiments}, the \emph{number of queries,} the \emph{time resolution} (i.e., the minimum time at which the algorithm queries the time evolution operator), the \emph{classical post-processing time}, and the number of \emph{ancilla qubits}.

Finally, our algorithms will also be robust to \emph{state-preparation and measurement (SPAM) error}. Following \cite[Definition 4]{ma2024learning}, an experiment suffers from an $\eps$-amount of SPAM error if the error channels applied after the initial state preparation and before the first query and the error channels after the last query and before the measurement induce in total $\eps$ error in diamond norm. We will say that an algorithm is \emph{robust} to an $\eps$ amount of SPAM error (or any other error) if the performance guarantees of the algorithm do not change in the presence of that error, maybe after increasing the complexities by constant factors. 

\subsection{Access via Gibbs state}
Hamiltonians also determine the equilibrium states of quantum systems. In particular, if a quantum system is governed by a Hamiltonian $H$, then the equilibrium state of the system at inverse temperature $\beta>0$ is  the \emph{Gibbs state} given by $\rho(\beta)=e^{-\beta H}/\Tr[e^{-\beta H}]$. 

An alternative access model for Hamiltonians is hence to perform measurements on copies of the Gibbs state of the Hamiltonian. The main figure of merit in this model is the \emph{sample complexity,} i.e., the number of copies of the Gibbs state that the algorithm accesses. Other important figures of merit that we will keep track of are the \emph{maximum number of copies that the algorithm measures coherently} and the \emph{classical post-processing time}. In particular, we say that an algorithm uses \emph{single copies} of the state if it measures one copy of the state at a time.

All of our results in this access model use the following upper bounds on the trace distance between Gibbs states, which are well-known in the classical literature \cite{santhanam2012information,daskalakis2019testing}, and similar bounds have been used in the quantum literature \cite{anshu2021sample,fanizza2024efficient}. 

\begin{lemma}\label{lem:dSKLGibbsStates}
    Let $\rho(\beta)$ and $\rho'(\beta)$ be Gibbs states of two $k$-local Hamiltonians $H$ and $H'$ acting on $n$ qubits. Then,
    \begin{equation}\label{eq:trnormforGibbs0}
        \norm{\rho(\beta)-\rho'(\beta)}_{\tr}\leq \sqrt{2 \beta \Tr[(\rho(\beta)-\rho'(\beta))(H'-H)]}.
    \end{equation}
    In particular,
    \begin{equation}\label{eq:trnormforGibbs1}
        \norm{\rho(\beta)-\rho'(\beta)}_{\tr}\le  200\beta n^k \sup_{|P|\leq k}|h_P-h_P'|.
    \end{equation}
    Furthermore, if $|h_P|,|h'_P|\leq 1$ for every $P\in\{I,X,Y,Z\}^{\otimes n}$, then 
    \begin{equation}\label{eq:trnormforGibbs2}
        \norm{\rho(\beta)-\rho'(\beta)}_{\tr}\le \sqrt{400\beta n^k \sup_{|P|\leq k}2^n|\rho(\beta)_P-\rho'(\beta)_P|}.
    \end{equation}
\end{lemma}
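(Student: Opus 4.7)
My strategy is to derive (\ref{eq:trnormforGibbs0}) from the quantum Pinsker inequality applied to the symmetrized relative entropy $d_{SKL}(\rho,\rho') := D(\rho\|\rho')+D(\rho'\|\rho)$, and then obtain (\ref{eq:trnormforGibbs1}) and (\ref{eq:trnormforGibbs2}) as corollaries by expanding $H-H'$ in the Pauli basis and bounding the resulting sum in two different ways (using trace-norm duality or using Pauli coefficients of the Gibbs states). The key structural observation, which makes the bound useful, is that the log-partition-function contributions to $d_{SKL}$ vanish because $\rho-\rho'$ is traceless.

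\textbf{Step 1: Prove (\ref{eq:trnormforGibbs0}).} Writing $\rho(\beta)=e^{-\beta H}/Z$ and $\rho'(\beta)=e^{-\beta H'}/Z'$, one has $\log\rho(\beta)=-\beta H-(\log Z)I$ and likewise for $\rho'(\beta)$. Hence
\begin{equation*}
    d_{SKL}(\rho(\beta),\rho'(\beta)) = \Tr\big[(\rho(\beta)-\rho'(\beta))(\log\rho(\beta)-\log\rho'(\beta))\big] = \beta\,\Tr\big[(\rho(\beta)-\rho'(\beta))(H'-H)\big],
\end{equation*}
where the $\log(Z/Z')I$ term drops out because $\Tr[\rho(\beta)-\rho'(\beta)]=0$. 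I then invoke the quantum Pinsker inequality $\|\rho-\rho'\|_{\tr}^2\le 2D(\rho\|\rho')$, and since $D(\rho'\|\rho)\ge 0$ we also have $\|\rho-\rho'\|_{\tr}^2\le 2 d_{SKL}(\rho(\beta),\rho'(\beta))$, giving (\ref{eq:trnormforGibbs0}).

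\textbf{Step 2: Prove (\ref{eq:trnormforGibbs1}).} Expand $H'-H = \sum_{|P|\le k}(h'_P-h_P)P$ and use trace-norm/operator-norm duality $|\Tr[(\rho-\rho')P]| \le \|\rho-\rho'\|_{\tr}\|P\|_{\op}=\|\rho-\rho'\|_{\tr}$, together with the bound $\#\{P:|P|\le k\}\le 100\,n^k$ from (\ref{eq:numberofklocalPaulis}):
\begin{equation*}
    \Tr\big[(\rho(\beta)-\rho'(\beta))(H'-H)\big] \le 100\, n^k \sup_{|P|\le k}|h_P-h'_P| \cdot \|\rho(\beta)-\rho'(\beta)\|_{\tr}.
\end{equation*}
Substituting into (\ref{eq:trnormforGibbs0}) yields $\|\rho(\beta)-\rho'(\beta)\|_{\tr}^2 \le 200\beta n^k \sup_{|P|\le k}|h_P-h'_P|\cdot\|\rho(\beta)-\rho'(\beta)\|_{\tr}$, and dividing by the trace distance gives (\ref{eq:trnormforGibbs1}).

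\textbf{Step 3: Prove (\ref{eq:trnormforGibbs2}).} This time use that the Pauli coefficients of the Gibbs states satisfy $\Tr[(\rho(\beta)-\rho'(\beta))P]=2^n(\rho(\beta)_P-\rho'(\beta)_P)$, so
\begin{equation*}
    \Tr\big[(\rho(\beta)-\rho'(\beta))(H'-H)\big] \le \sup_{|P|\le k}2^n|\rho(\beta)_P-\rho'(\beta)_P| \cdot \sum_{|P|\le k}|h_P-h'_P| \le 200\, n^k \sup_{|P|\le k}2^n|\rho(\beta)_P-\rho'(\beta)_P|,
\end{equation*}
using $|h_P|,|h'_P|\le 1$ and the counting bound (\ref{eq:numberofklocalPaulis}). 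Plugging this into (\ref{eq:trnormforGibbs0}) immediately gives (\ref{eq:trnormforGibbs2}).

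\textbf{Main obstacle.} The content of the lemma sits almost entirely in Step~1, specifically in recognizing that $d_{SKL}$ between two Gibbs states reduces cleanly to $\beta\,\Tr[(\rho-\rho')(H'-H)]$; once that identity is in hand, Steps~2 and~3 are routine Cauchy–Schwarz/Hölder-type manipulations with the Pauli expansion. No part of the argument requires $k$-locality beyond the Pauli counting bound, so the same proof works for any $k$.
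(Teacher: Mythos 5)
Your proposal is correct and follows essentially the same route as the paper: Pinsker applied to the symmetrized relative entropy, with the log-partition-function terms cancelling by tracelessness of $\rho(\beta)-\rho'(\beta)$, followed by Hölder/Plancherel-type bounds and the counting estimate \eqref{eq:numberofklocalPaulis}. The only cosmetic difference is in Step 2, where you bound the Pauli expansion term by term while the paper first isolates $\norm{H-H'}_{\op}$ and then bounds it by $100 n^k \sup_{|P|\le k}|h_P-h_P'|$; the two manipulations are equivalent.
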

\begin{proof}
    We start using Pinkser inequality to upper bound the trace norm as 
    \begin{align*}
        \norm{\rho(\beta)-\rho'(\beta)}_{\tr}\le\sqrt{2\Tr[\rho(\beta)(\log\rho(\beta)-\log\rho'(\beta))]+2\Tr[\rho'(\beta)(\log\rho'(\beta)-\log\rho(\beta))]}.
    \end{align*}
    Now, expanding the right-hand side and using that $\log \rho(\beta)=-\beta H-Z(\beta)$, where $Z(\beta)=\Tr[e^{-\beta H}]$, we arrive at 
    %\mnote{I agree with Andreas's comment, I think it should either be $-\beta$ or maybe $H'-H$. Of course, I might have made a sign error...}\fnote{True! Fixed.}
    \begin{align}
        \norm{\rho(\beta)-\rho'(\beta)}_{\tr}&\le \sqrt{2\beta\Tr[(\rho(\beta)-\rho'(\beta))(H'-H)]}.\label{eq:trnormforGibbs3}
    \end{align}
    This proves \cref{eq:trnormforGibbs0}.

    Now, we focus on proving \cref{eq:trnormforGibbs1}. On the one hand, using \cref{eq:trnormforGibbs3} and that $|\Tr[A^\dagger B]|\leq \norm{A}_{\tr} \norm{B}_{\op}$ we get 
    \begin{align*}
        \norm{\rho(\beta)-\rho'(\beta)}_{\tr}
        &\le \sqrt{2\beta\norm{\rho(\beta)-\rho'(\beta)}_{\tr}\norm{H-H'}_{\op}}\, ,%\leq \sqrt{2\beta\norm{\rho(\beta)-\rho'(\beta)}_{\tr}\norm{H-H'}_{\op}},
    \end{align*}
    so 
    \begin{align}
        \norm{\rho(\beta)-\rho'(\beta)}_{\tr}
        &\le 2\beta\norm{H-H'}_{\op},\label{eq:1}
    \end{align}
    On the other hand, by triangle inequality and \cref{eq:numberofklocalPaulis}, we have that %\mnote{Similarly to Andreas, I think that the next line uses an implicit assumption, see Overleaf comment.} \fnote{I see now! Fixed!}
    \begin{align*}
        \norm{H-H'}_{\op}&\leq  100 n^k \sup_{|P|\leq k}|h_P-h_P'|,
    \end{align*}
%    \begin{align*}
%        \norm{H-H'}_{\op}&\leq \sum_{l=0}^k 3^l{n\choose l}\sup_{|P|\leq k}|h_P-h_P'|\leq (k+1)3^k{n \choose k}\sup_{|P|\leq k}|h_P-h_P'|\\
%        &\leq (k+1)3^k\left(\frac{en}{k}\right)^k \sup_{|P|\leq k}|h_P-h_P'|\leq  100 n^k \sup_{|P|\leq k}|h_P-h_P'|,
%    \end{align*}
    which combined with \cref{eq:1} proves \cref{eq:trnormforGibbs1}. 

    Now, we focus on proving \cref{eq:trnormforGibbs2}. Using \cref{eq:trnormforGibbs3} and Plancherel's identity we arrive at 
    \begin{align*}
        \norm{\rho(\beta)-\rho'(\beta)}_{\tr}&\le\sqrt{2\beta\sum_{|P|\leq k}2^n(\rho(\beta)_P-\rho'(\beta)_P)(h_P-h'_P)}.
    \end{align*}
    Hence, as $|h_P|,|h_P'|\leq 1$ and by \cref{eq:numberofklocalPaulis}, we have that 
    \begin{align*}
        \norm{\rho(\beta)-\rho'(\beta)}_{\tr}&=\sqrt{400\beta n^k\sup_{|P|\leq k}2^n|\rho(\beta)_P-\rho'(\beta)_P|},
    \end{align*}
    which proves \cref{eq:trnormforGibbs2}.
\end{proof}

\subsubsection{Trotterization}
Given access to $e^{-itA}$ and $e^{-itB}$ for two Hamiltonians $A$ and $B$ and arbitrary times $t$, Trotterization allows us to implement $e^{-it(A+B)}$ up to arbitrary error while also preserving the total time evolution and without using extra qubits. Thus, to analyze the number of experiments and the total time evolution required by our algorithms, if we have access to $e^{-itA}$ and $e^{-itB}$, we may assume access to $e^{-it(A+B)}$. However, the number of queries and the time resolution change. To be more precise, we will use the following result. 

\begin{theorem}\cite[Corollary 2]{childs2021theory}\label{theo:trotterization}
    Let $t>0$, let $\eps>0$, let $H,H_0$ be Hamiltonians acting on $n$-qubits, and let $c=\max\{\norm{H}_{\op},\norm{H_0}_{\op}\}$. Let $l=\left\lceil O\left(\sqrt{(ct)^3/\eps_{\operatorname{Trott}}}\right)\right\rceil$ and define $V=(e^{-itH/2l}  e^{itH_0/l}e^{-itH/2l})^l$. Then, 
    % \begin{equation*}
    %     \norm{e^{-it(H-H_0)}\cdot e^{it(H-H_0)}-V\cdot V^{\dagger}}_{\diamond}\leq \eps.
    % \end{equation*}
        \begin{equation*}
        %\textcolor{red}{
        \norm{e^{-it(H-H_0)}-V}_{\op}
        %}
        \leq \eps_{\operatorname{Trott}}.
    \end{equation*}
\end{theorem}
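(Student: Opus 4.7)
The plan is to establish the standard second-order (symmetric) Suzuki--Trotter bound, following the local error $\times$ number of steps template. I will denote by $s := t/l$ the per-step parameter, by $W := e^{-isH/2}\,e^{isH_0}\,e^{-isH/2}$ the single Trotter step, and by $U := e^{-is(H-H_0)}$ the target one-step evolution, so that $V = W^l$ and $e^{-it(H-H_0)} = U^l$.

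First I would bound the per-step error $\|W - U\|_{\op}$. The natural route is to Taylor-expand both $W$ and $U$ in the parameter $s$ up to order two around zero. Since the symmetric product $e^{-isH/2}\,e^{isH_0}\,e^{-isH/2}$ is designed exactly so that its order-$0$, order-$1$ and order-$2$ terms in $s$ coincide with those of $e^{is(H_0-H)}$ (the commutators $[H,H_0]$ that would appear at second order in an asymmetric Lie--Trotter formula cancel), the leading discrepancy sits at order $s^3$. The coefficient is a universal polynomial in nested commutators of $H$ and $H_0$, each term bounded in operator norm by $O(c^3)$ using $\|H\|_{\op},\|H_0\|_{\op}\le c$ and submultiplicativity. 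A clean way to package this without writing out all commutators is to use the integral form of the remainder of Taylor's theorem applied to the smooth matrix-valued functions $s\mapsto W(s)$ and $s\mapsto U(s)$, giving
\begin{equation*}
\|W-U\|_{\op}\;\le\; C\,(cs)^3 \;=\; C\,(ct/l)^3
\end{equation*}
for an absolute constant $C$.

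Next I would pass from per-step to global error by telescoping. Write
\begin{equation*}
V - e^{-it(H-H_0)} \;=\; W^l - U^l \;=\; \sum_{j=0}^{l-1} W^{j}\,(W-U)\,U^{l-1-j}.
\end{equation*}
Since $W$ and $U$ are unitary, $\|W^j\|_{\op}=\|U^{l-1-j}\|_{\op}=1$, so the triangle inequality yields
\begin{equation*}
\|V - e^{-it(H-H_0)}\|_{\op} \;\le\; l\cdot\|W-U\|_{\op}\;\le\; C\,\frac{(ct)^3}{l^2}.
\end{equation*}
Finally, choosing $l = \lceil \sqrt{C(ct)^3/\eps_{\operatorname{Trott}}}\rceil$ makes the right-hand side at most $\eps_{\operatorname{Trott}}$, which is the claimed $l = \lceil O(\sqrt{(ct)^3/\eps_{\operatorname{Trott}}})\rceil$.

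The only genuinely delicate point is the per-step estimate: one must verify that the symmetric splitting actually cancels the $s^2$ term so that the exponent $3$ (rather than $2$) appears and the final count of steps scales like $(ct)^{3/2}/\sqrt{\eps_{\operatorname{Trott}}}$. This is the classical content of the cited Corollary~2 of \cite{childs2021theory}; once it is granted, the telescoping argument and the choice of $l$ are routine. In the write-up I would simply invoke the corollary for the statement and sketch the Taylor/telescoping steps above for intuition.
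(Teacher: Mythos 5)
Your proposal is correct and matches the paper's treatment: the paper does not prove this statement at all but simply imports it as \cite[Corollary 2]{childs2021theory}, which is exactly what you propose to do in the write-up. Your accompanying sketch (Strang splitting cancels the order-$s^2$ term so the per-step error is $O((cs)^3)$, telescoping over $l$ unitary steps gives $O((ct)^3/l^2)$, then choose $l=\lceil\sqrt{O((ct)^3/\eps_{\operatorname{Trott}})}\rceil$) is the standard argument underlying that corollary and is sound.
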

\subsubsection{Useful subroutines}

We will use the following lemma that was proved in \cite[Lemma 3.3]{arunachalam2024testing}.\footnote{We note that in \cite[Lemma 3.3]{arunachalam2024testing} the authors only explicitly analyze the query complexity of their algorithm, but the analysis of the remaining figures of merit is straightforward.} Before stating it, we recall that a stabilizer subgroup of the group of Pauli matrices $\mathcal{S}\subseteq\{I,X,Y,Z\}^{\otimes n}$ is an abelian subgroup that does not contain $-I$. A stabilizer state corresponding to a stabilizer subgroup $\mathcal{S}$ of dimension $k\le n$ is defined as
\begin{align*}
\rho_{\mathcal{S}}:=\frac{1}{2^n}\sum_{P\in\mathcal{S}}P\,.
\end{align*}

\begin{lemma}\label{lem:memorylessPaulisampling}
    Let $U$ be an $n$-qubit unitary, and let $\eps,\delta>0$. There is a memory-less algorithm that makes $O\big(\log(1/\delta)/\eps^2\big)$ experiments that provides an estimate $|u'_{I^{\otimes n}}|^2$ such that %\mnote{Sorry, naive question: Do we really only care about $u_{I^{\otimes n}}$ and not also about other Pauli coefficients?}\fnote{Yes, we care only about this one Pauli coefficient}
    $$
    \big||u_{I^{\otimes n}}|^2-|u_{I^{\otimes n}}'|^2\big|\leq\eps
    $$
    with probability $\geq 1-\delta$. Furthermore, the algorithm makes only one query to $U$ per experiment, only stabilizer states, and only performs Clifford measurements. In addition, it is robust to $\eps/3$ amount of SPAM errors and $\eps/3$ error in diamond norm per query of $U$%\textcolor{red}{I find this a bit confusing, also where is this done?}
    , and requires only $O\big(\log(1/\delta)/\eps^2\big)$ classical post-processing time.
\end{lemma}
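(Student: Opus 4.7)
The plan is to recognize $|u_{I^{\otimes n}}|^2 = |\Tr[U]/2^n|^2$ as the probability of a specific outcome in a Bell-basis (i.e.\ Pauli) sampling procedure, and then estimate this probability by empirical frequency. Concretely, in each experiment I would prepare the $n$-fold maximally entangled state $|\Phi^+\rangle^{\otimes n}$ on $2n$ qubits, apply $U$ to the first $n$-qubit register, and perform a Bell-basis measurement on the whole system. The state $|\Phi^+\rangle^{\otimes n}$ is a stabilizer state (stabilized by $\{X_i X_{i+n},\,Z_i Z_{i+n}\}_{i=1}^n$), and the Bell-basis measurement is the associated Clifford measurement, so the stabilizer-state / Clifford-measurement requirements of the lemma are met, as is the one-query-per-experiment requirement.

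The next step is to verify that the Bell outcome labelled $I^{\otimes n}$ has probability exactly $|u_{I^{\otimes n}}|^2$. The $n$-qubit Bell basis can be written as $\bigl\{(P\otimes I)|\Phi^+\rangle^{\otimes n}\bigr\}_{P\in\{I,X,Y,Z\}^{\otimes n}}$, and the standard identity $\langle \Phi^+|^{\otimes n}(A\otimes I)|\Phi^+\rangle^{\otimes n} = \Tr[A]/2^n$ (valid for any operator $A$) gives the amplitude of outcome $P$ as $\Tr[PU]/2^n = u_P$. Therefore the distribution over Pauli labels is exactly $\{|u_P|^2\}_P$, and in particular $\Pr[P=I^{\otimes n}] = |u_{I^{\otimes n}}|^2$.

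Given this, I would run $m = O(\log(1/\delta)/\eps^2)$ independent experiments, let $X_i \in \{0,1\}$ indicate that the $i$-th experiment returned outcome $I^{\otimes n}$, and report $|u_{I^{\otimes n}}'|^2 := \frac{1}{m}\sum_{i=1}^m X_i$. Hoeffding's inequality for bounded Bernoulli variables yields $\Pr\bigl[\bigl||u_{I^{\otimes n}}'|^2-|u_{I^{\otimes n}}|^2\bigr|>\eps\bigr] \leq 2 e^{-2m\eps^2}\leq \delta$. The algorithm is manifestly memory-less (experiments are independent), uses only stabilizer-state preparation and a Clifford measurement, performs a single query to $U$ per experiment, and needs only $O(m)$ classical post-processing (tallying the $X_i$).

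For the robustness claim, I would use that any SPAM channel of diamond-norm error $\eta$ perturbs the output distribution of one experiment in total variation by at most $\eta$, and similarly a per-query channel error of $\eta$ contributes another $\eta$ in total variation. Setting the sampling tolerance to $\eps/3$ (which still gives $m=O(\log(1/\delta)/\eps^2)$) and budgeting $\eps/3$ to each of SPAM and per-query noise, the triangle inequality yields total error at most $\eps$ with probability $\geq 1-\delta$. No step presents a genuine obstacle: the identification of $|u_{I^{\otimes n}}|^2$ with a Bell-outcome probability is the standard Hilbert--Schmidt / Choi-state trick, and the only care needed is in bookkeeping the $\eps/3+\eps/3+\eps/3$ error decomposition.
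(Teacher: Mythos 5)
Your estimation argument is sound as far as it goes: the Choi/Bell-sampling identification $\Pr[\text{outcome } I^{\otimes n}]=|u_{I^{\otimes n}}|^2$, the Hoeffding bound with $m=O(\log(1/\delta)/\eps^2)$, and the $\eps/3+\eps/3+\eps/3$ error bookkeeping are all correct. The genuine gap is the word \emph{memory-less}, which is the entire point of this lemma (and of its name). Your protocol prepares $|\Phi^+\rangle^{\otimes n}$ on $2n$ qubits and queries $U$ on half of a maximally entangled state, i.e.\ it requires an $n$-qubit quantum memory/ancilla register that must be kept coherent while the unknown evolution acts. The paper explicitly relies on the ancilla-free property downstream: \cref{lem:IsingCertificationTimeEvolutionBasisCase} and \cref{theo:IsingCertificationTimeEvolution} claim ``the algorithm uses no ancilla qubits,'' and this is inherited precisely from \cref{lem:memorylessPaulisampling}. (Note also that the paper does not reprove the lemma; it imports it from \cite[Lemma 3.3]{arunachalam2024testing}, whose content is exactly an ancilla-free replacement for the Bell-sampling trick you use.)

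To close the gap you need an estimator acting on the $n$ system qubits alone. One standard route: use the depolarizing identity $\tfrac{1}{4^n}\sum_P PUP=\tfrac{\Tr U}{2^n}\,I$, which gives $|u_{I^{\otimes n}}|^2=\tfrac{1}{4^n}\sum_P \tfrac{1}{2^n}\Tr[PU^\dagger PU]$. So in each experiment draw a uniformly random Pauli string $P$, prepare a uniformly random stabilizer eigenstate $|\phi\rangle$ of $P$ (a product stabilizer state) and record its eigenvalue $\lambda\in\{\pm1\}$, apply $U$ once, measure the Pauli observable $P$ (a Clifford measurement) to get $\mu\in\{\pm1\}$, and output $\lambda\mu$. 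Then $\E[\lambda\mu]=\E_P\tfrac{1}{2^n}\Tr[PU^\dagger PU]=|u_{I^{\otimes n}}|^2$, the outputs are bounded in $[-1,1]$, and your Hoeffding and SPAM arguments go through verbatim — now with one query per experiment, only stabilizer states and Clifford measurements, no ancillas, and $O(\log(1/\delta)/\eps^2)$ classical post-processing. As written, your proof establishes a strictly weaker statement than the one the paper needs.
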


We will also need to perform classical shadow tomography.

\begin{theorem}[Clifford shadows \cite{huang2020predicting}]\label{theo:Shadows}
    Let $\rho$ be an $n$-qubit state and let $k\in\mathbb N$, $\eps>0$ and $\delta>0.$ Then, performing random Pauli measurements on $$O
    \left(\frac{3^kk\log(n/\delta)}{\eps^2}\right)$$ single copies of $\rho$ suffices to obtain estimates $\widetilde \rho_P$ that with probability $\geq 1-\delta$ satisfying $$2^n|\rho_P-\widetilde \rho_P|\leq \eps$$ for every $|P|\leq k$. The classical post-procesing time is $O
    \left((3n)^kk\log(n/\delta)/\eps^2\right)$.
\end{theorem}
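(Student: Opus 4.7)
The plan is to use the standard random Pauli measurement classical-shadows estimator: for each copy of $\rho$, sample an independent basis $b = (b_1, \dots, b_n) \in \{X, Y, Z\}^n$ uniformly at random, measure qubit $j$ in the eigenbasis of $b_j$ to obtain an outcome $s_j \in \{-1, +1\}$, and for each target Pauli $P$ with support $S = \{j : P_j \neq I\}$ of weight $w = |P| \leq k$, form the single-shot estimator
$$\widehat{X}_P \;=\; 3^w \cdot \mathbf{1}\!\left[b_j = P_j \ \forall j \in S\right] \cdot \prod_{j \in S} s_j\,.$$
The final estimate $\widetilde{\rho}_P$ is then $2^{-n}$ times the empirical mean of $\widehat{X}_P$ over $M$ independent copies.

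Next I would carry out the two moment computations. For the expectation, since the $b_j$ are independent and uniform on $\{X,Y,Z\}$, the event $\{b_j = P_j \ \forall j \in S\}$ has probability exactly $3^{-w}$, cancelling the prefactor $3^w$; conditional on this match, the measurements on the complement of $S$ are effectively traced out and those on $S$ are in the eigenbasis of $P$, so $\E\!\left[\prod_{j \in S} s_j \,\middle|\, \mathrm{match}\right] = \Tr[P\rho]$, yielding $\E[\widehat{X}_P] = \Tr[P\rho] = 2^n \rho_P$. For the second moment, $\widehat{X}_P^2 = 3^{2w}\,\mathbf{1}[\mathrm{match}]$, so $\E[\widehat{X}_P^2] = 3^w \leq 3^k$; in particular $\mathrm{Var}(\widehat{X}_P) \leq 3^k$ and $|\widehat{X}_P| \leq 3^k$ almost surely.

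Bernstein's inequality (or, as in the original paper, a median-of-means construction) then shows that $M = O(3^k \log(1/\delta')/\eps^2)$ samples suffice to estimate $\Tr[P\rho]$ up to additive error $\eps$ with probability at least $1 - \delta'$. A union bound over the at most $100 n^k$ Pauli strings of weight $\leq k$ (via \cref{eq:numberofklocalPaulis}) with $\delta' = \delta/(100 n^k)$ converts $\log(1/\delta')$ into $O(k \log(n/\delta))$, giving the stated sample complexity $O(3^k k \log(n/\delta)/\eps^2)$. The post-processing bound follows by iterating through the $O(n^k)$ Pauli targets for each of the $M$ samples and performing the $O(k)$ check per target. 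The main subtlety to get right is pinning the variance at $3^k$ rather than the naive $3^{2k}$: this uses that the squared estimator is a rare indicator (probability $3^{-w}$) scaled by $3^{2w}$, and it is precisely what saves a factor of $3^k$ in the final sample complexity; beyond this point, the argument is routine concentration plus union bound.
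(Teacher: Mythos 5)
Your proposal is correct, and it is essentially the argument behind the result as the paper uses it: \cref{theo:Shadows} is stated as an import from \cite{huang2020predicting} with no in-text proof, and your single-shot estimator, the moment bounds $\E[\widehat{X}_P]=\Tr[P\rho]$ and $\E[\widehat{X}_P^2]=3^{|P|}\leq 3^k$, concentration (Bernstein or median-of-means), and the union bound over the at most $100n^k$ local Paulis via \eq{numberofklocalPaulis} constitute the standard local (random-Pauli) shadow analysis of that work. The only discrepancy is an immaterial extra factor of $k$ in your post-processing count coming from the $O(k)$ per-target check, which can be removed by organizing the per-sample enumeration of matched supports incrementally and in any case does not affect the sample-complexity claim.
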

\subsection{Bonami Lemma}
We will use the quantum version of Bonami Lemma~\cite{bonami1970etude} proved by Montanaro and Osborne~\cite[Corollary 8.9]{montanaro2008quantum}. 
\begin{theorem}\label{lem:BonamiLemma}
    Given a $k$-local Hamiltonian $H$ on $n$-qubits and $l\geq 2,$ it holds that $$\left(\frac{\Tr[|H|^l]}{2^n}\right)^{1/l}\leq l^{k/2}\left(\frac{\Tr[ H^2]}{2^n}\right)^{1/2}.$$ 
\end{theorem}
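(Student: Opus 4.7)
The plan is to deduce this $k$-local Bonami-type bound from the quantum hypercontractive inequality for the Pauli noise operator, mirroring the standard reduction from hypercontractivity in Boolean Fourier analysis. Define, for $\rho\in[0,1]$, the Pauli noise operator $T_\rho$ on $n$-qubit operators by setting $T_\rho(P)=\rho^{|P|}P$ on each Pauli string $P$ and extending linearly to all operators. The key input I would use is the noncommutative hypercontractive inequality of Montanaro and Osborne~\cite{montanaro2008quantum}: for every self-adjoint $n$-qubit operator $A$ and every $l\geq 2$,
$$\left(\frac{\Tr[|T_\rho(A)|^l]}{2^n}\right)^{1/l}\leq \left(\frac{\Tr[A^2]}{2^n}\right)^{1/2}\quad\text{whenever } \rho\leq \frac{1}{\sqrt{l-1}}\, .$$

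Given a $k$-local Hamiltonian $H=\sum_{|P|\leq k}h_P P$, I set $\rho=1/\sqrt{l-1}$ and define $G:=\sum_{|P|\leq k}\rho^{-|P|}h_P P$, which is self-adjoint and satisfies $T_\rho(G)=H$ (the coefficients vanish for $|P|>k$). By Parseval's identity together with $\rho^{-|P|}\leq \rho^{-k}$ for all $|P|\leq k$,
$$\frac{\Tr[G^2]}{2^n}=\sum_{|P|\leq k}\rho^{-2|P|}|h_P|^2\leq \rho^{-2k}\sum_{|P|\leq k}|h_P|^2=(l-1)^k\,\frac{\Tr[H^2]}{2^n}\, .$$
Plugging $G$ into the hypercontractive bound and using $l-1\leq l$ yields
$$\left(\frac{\Tr[|H|^l]}{2^n}\right)^{1/l}=\left(\frac{\Tr[|T_\rho(G)|^l]}{2^n}\right)^{1/l}\leq \left(\frac{\Tr[G^2]}{2^n}\right)^{1/2}\leq l^{k/2}\left(\frac{\Tr[H^2]}{2^n}\right)^{1/2}\, ,$$
which is the claimed inequality.

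The main obstacle is the noncommutative hypercontractive inequality itself, which Montanaro and Osborne establish by induction on $n$: the single-qubit case is handled by diagonalizing the action of $T_\rho$ on $\mathrm{span}\{I,X,Y,Z\}$ and checking the $L^l\to L^2$ bound directly by a short calculation, and the general case follows from a tensorization argument via a two-operator version of hypercontractivity. Consequently, the entire content of the proof I am proposing is the short reduction above; once one accepts the hypercontractive inequality as a black box, the $k$-local Bonami bound is a one-line consequence, and the appearance of the factor $l^{k/2}$ (rather than, say, $l^{k}$) is precisely tracked by how many $\rho^{-1}=\sqrt{l-1}$ factors the inversion of $T_\rho$ introduces on terms of weight up to $k$.
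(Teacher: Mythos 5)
Your derivation is correct: inverting the noise operator on the low-weight Pauli coefficients and applying the $(2,l)$ quantum hypercontractive inequality at $\rho=1/\sqrt{l-1}$ is exactly the standard route, and it is precisely how Corollary 8.9 of Montanaro--Osborne (the result the paper invokes without proof) is obtained from their hypercontractivity theorem. So your argument matches the source's proof, with the only content beyond the black-box hypercontractive inequality being the bookkeeping $\Tr[G^2]/2^n\leq (l-1)^k\,\Tr[H^2]/2^n\leq l^k\,\Tr[H^2]/2^n$, which you carry out correctly.
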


\section{Hamiltonian certification via access to time-evolution}
In this section, we propose an algorithm that uses access to time-evolution to certify whether an unknown Ising Hamiltonian $H$ is close to or far from a known Ising Hamiltonian $H_0$. We prove that $\widetilde O(1/\eps)$ total evolution time suffices to solve this problem optimally (see \cref{theo:IsingCertificationTimeEvolution}). %\textcolor{red}{optimality}

We start by proving that such a certification is possible in a more restricted setting where both Hamiltonians are promised to be not too far from each other (see \cref{lem:IsingCertificationTimeEvolutionBasisCase}). The result in the general setting proceeds by iterating this restricted case.

\begin{algorithm}[H]
\caption{Hamiltonian certification subroutine}\label{alg:quantum_data1}

\begin{algorithmic}[1]
\Require Parameters $\delta\in(0,1)$, $\eps>0$, time evolution access to $H$ and $H_0$ 
\State Implement unitary $V$ from \Cref{theo:trotterization}, with $\eps_{\operatorname{Trott}}=\tfrac{1}{19200 e^6 C^2}$ and $t=1/(60\eps e^3 C)$, where $C=\sum_{l=0}^\infty (e^{-2})^l$.
\State Use the algorithm of \cref{lem:memorylessPaulisampling} to obtain $|v_{I^{\otimes n}}'|^2$, that, with probability $\geq 1-\delta$, is an $\tfrac{1}{4800 e^6 C^2}$-estimate of $|v_{I^{\otimes n}}|^2$ 
\If{$|v_{I^{\otimes n}}'|^2 \leq 1 - \tfrac{23}{2400 e^6 C^2}$}
  \State \Return ``FAR''
\Else
  \State \Return ``CLOSE''
\EndIf
\end{algorithmic}
\end{algorithm}

\begin{lemma}\label{lem:IsingCertificationTimeEvolutionBasisCase}
    Let $H$ and $H_0$ be $n$-qubit Ising Hamiltonians, where $H_0$ is known and $H$ can be accessed via its time evolution operator, and denote $\Delta H:=H-H_0$. Let $\eps,\delta>0.$ Let $C_{\op}\geq 1$ be such that $\norm{H_0}_{\op},\norm{H}_{\op}\leq C_{\op}$. Assume that $\norm{\Delta H}_{\bar F}\leq 15\eps$. Then, \Cref{alg:quantum_data1} uses $O(\log(1/\delta)/\eps)$ total evolution time to test whether $\norm{\Delta H}_{\bar F}< \eps$ or $\norm{\Delta H}_{\bar F}> 12\eps.$ %\mnote{Because the color in the proof might be a bit hidden: $\leq$ and $\geq$ or $<$ and $>$?} \fnote{Good catch, fixed}
    
        Moreover, even if none of the two promises is satisfied, with probability $1-\delta$, we have that if the algorithm outputs ``FAR'', then $\norm{\Delta H}_{\bar F}\geq\eps$, and if it outputs ``CLOSE'', then $\norm{\Delta H}_{\bar F}\leq 12\eps.$ 

    Furthermore, the algorithm uses no ancilla qubits, it makes $O(\log(1/\delta))$ experiments, it makes $O((C_{\op}/\eps)^{3/2}\cdot \log(1/\delta))$ queries, the time resolution is $\Omega(\eps^{1/2}/C_{\op}^{3/2})$, the algorithm is robust to a constant amount of SPAM errors, and the classical post-processing time is $O(\log(1/\delta))$.
\end{lemma}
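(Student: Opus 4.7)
The plan is to realize the Taylor-plus-Bonami strategy sketched in the introduction. The starting point is \Cref{theo:trotterization} with evolution time $t=1/(60\eps e^3 C)$ and a small constant Trotter error $\eps_{\operatorname{Trott}}$, which builds the unitary $V$ of \Cref{alg:quantum_data1} with $\|V-e^{-it\Delta H}\|_{\op}\leq \eps_{\operatorname{Trott}}$. Writing $u=\Tr[e^{-it\Delta H}]/2^n$ and $v=\Tr[V]/2^n$ for the Pauli-identity coefficients, this operator-norm bound transfers to $|u-v|\leq \eps_{\operatorname{Trott}}$, so running \Cref{lem:memorylessPaulisampling} on $V$ yields an estimate of $|u|^2$ up to constant-order tolerance with $O(\log(1/\delta))$ experiments and no ancillas.

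The main computation is to analyze $|u|^2$ as a function of $\|\Delta H\|_{\bar F}$. Power-series expansion of $e^{-it\Delta H}$ together with $\Tr[\Delta H]=0$ gives
\begin{equation*}
u \;=\; 1 \;-\; \tfrac{t^2}{2}\|\Delta H\|_{\bar F}^2 \;+\; R,\qquad R \;=\; \sum_{l=3}^{\infty}\frac{(-it)^l}{l!}\cdot\frac{\Tr[(\Delta H)^l]}{2^n}.
\end{equation*}
Because both $H$ and $H_0$ are Ising, $\Delta H$ is $2$-local, and \Cref{lem:BonamiLemma} with $k=2$ bounds $\Tr[|\Delta H|^l]/2^n\leq l^l\|\Delta H\|_{\bar F}^l$. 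Combined with $l!\geq (l/e)^l$, this gives $|R|\leq \sum_{l\geq 3}(et\|\Delta H\|_{\bar F})^l$. Under the promise $\|\Delta H\|_{\bar F}\leq 15\eps$, the ratio $et\|\Delta H\|_{\bar F}$ is at most $e^{-2}$, so the series converges and $|R|=O(1/(e^6 C^2))$, which is a small fraction of the signal $\tfrac{1}{2}t^2\|\Delta H\|_{\bar F}^2$ in the $\|\Delta H\|_{\bar F}>12\eps$ regime.

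With these bounds in hand the two cases separate cleanly: $\|\Delta H\|_{\bar F}<\eps$ forces $|u|^2$ to lie strictly above the threshold $1-23/(2400 e^6 C^2)$ used by the algorithm, while $\|\Delta H\|_{\bar F}>12\eps$ forces $|u|^2$ to lie strictly below it, by a margin that comfortably absorbs the $1/(4800 e^6 C^2)$ additive error from \Cref{lem:memorylessPaulisampling} together with $\eps_{\operatorname{Trott}}$. The constants $60$, $15$, and $23/2400$ are chosen precisely so that both margins exceed the sampling error. The ``moreover'' clause follows by the contrapositive of exactly the same two inequalities, since the Bonami bound only requires the promise $\|\Delta H\|_{\bar F}\leq 15\eps$ and not either of the sharper cases. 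For the figures of merit, one query to $V$ uses total evolution time $O(t)=O(1/\eps)$, consists of $\ell=O((C_{\op}t)^{3/2})$ Trotter steps with time resolution $\Omega(t/\ell)=\Omega(\eps^{1/2}/C_{\op}^{3/2})$, and uses no ancillas; multiplying by the $O(\log(1/\delta))$ experiments of \Cref{lem:memorylessPaulisampling} yields the claimed query, experiment, and classical post-processing bounds, and SPAM robustness is inherited directly from that subroutine.

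The delicate step is the Bonami bound on $R$: this is what lets us choose $t$ as large as $\Theta(1/\eps)$ while keeping the higher-order tail below the signal. A cruder estimate such as $\Tr[|\Delta H|^l]/2^n\leq \|\Delta H\|_{\op}^{l-2}\|\Delta H\|_{\bar F}^2$ would make $|R|$ depend on $\|\Delta H\|_{\op}$ and force $t=O(1)$, destroying the Heisenberg $1/\eps$ scaling that is the whole point of the lemma. Everything else is bookkeeping around this core estimate.
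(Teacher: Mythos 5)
Your proposal is correct and follows essentially the same route as the paper's proof: Trotterize to access $e^{-it\Delta H}$ with $t=\Theta(1/\eps)$, estimate $|u_{I^{\otimes n}}|^2$ via \cref{lem:memorylessPaulisampling}, Taylor-expand, and control the tail with the $k=2$ Bonami bound $\Tr[|\Delta H|^l]/2^n\leq l^l\norm{\Delta H}_{\bar F}^l$ together with $l!\geq (l/e)^l$ and the promise $\norm{\Delta H}_{\bar F}\leq 15\eps$, exactly as in the paper, including the observation that the ``moreover'' clause is the contrapositive of the same two implications. The complexity accounting (queries, time resolution, experiments, SPAM robustness) also matches the paper's analysis.
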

\begin{proof}

First, by the Trotterization of \cref{theo:trotterization}, for $U=e^{-it\Delta H}$,
\begin{align*}
|v_{I^{\otimes n}}-u_{I^{\otimes n}}|=\frac{1}{2^n}\Big|\Tr(I^{\otimes n}[U-V])\Big|\le \|U-V\|_{\op}\le \eps_{\operatorname{Trott}}\, ,
\end{align*}
so the estimate $|v'_{I^\otimes n}|^2$ of $|v_{I^{\otimes n}}|^2$, in the presence of $\eps_{\mathrm{SPAM}}$ error of at most $\tfrac{1}{9600 e^6 C^2}$,  is a $(\tfrac{1}{4800 e^6 C^2}+2\eps_{\operatorname{Trott}}+\eps_{\text{SPAM}}=\tfrac{1}{2400 e^6 C^2})$-estimate of $u_{I^{\otimes n}}$. From this estimate, we show correctness and then perform a complexity analysis.

       \textbf{Correctness analysis.} We aim to prove that with probability $\geq 1-\delta,$ $\norm{\Delta H}_{\bar F}> 12\eps\implies $ we output ``FAR'', and  $\norm{\Delta H}_{\bar F}< \eps\implies $ we output ``CLOSE''. We start by noting that by Taylor expansion 
       \begin{align*}
            \left|u_{I^{\otimes n}}-\left(1-\frac{1}{2}\frac{t^2\Tr[\Delta H^2]}{2^n}\right)\right|\leq \sum_{l=3}^\infty \frac{t^l}{l!}\frac{\Tr[|\Delta H|^l]}{2^n}.
        \end{align*}
        Note that we can identify $\norm{\Delta H}_{\bar F}$ in the above expression, so we can rewrite
        \begin{align*}
            \left|u_{I^{\otimes n}}-\left(1-\frac{(t\norm{\Delta H}_{\bar F})^2}{2}\right)\right|\leq \sum_{l=3}^\infty \frac{t^l}{l!}\frac{\Tr[|\Delta H|^l]}{2^n}.
        \end{align*}
        Now, we can upper-bound the right-hand side as
        \begin{align}\label{eq:wheretheproofbreaks}
            \sum_{l=3}^\infty \frac{t^l}{l!}\frac{\Tr[|\Delta H|^l]}{2^n}&\leq \sum_{l=3}^\infty \frac{t^l}{l!}\left(l\left(\frac{\Tr[\Delta H^2]}{2^n}\right)^{1/2}\right)^{l}=\sum_{l=3}^\infty (t\norm{\Delta H}_{\bar F})^l\frac{l^{l}}{l!}\\
            &\leq \sum_{l=3}^\infty (t\norm{\Delta H}_{\bar F})^le^{l}=  e^3(t\norm{\Delta H}_{\bar F})^3\sum_{l=0}^\infty (et\norm{\Delta H}_{\bar F})^l\nonumber \\
    &\leq  e^3(t\norm{\Delta H}_{\bar F})^3\underbrace{\sum_{l=0}^\infty (e^{-2})^l}_{=C},\nonumber
    \end{align}
    where in the first line we have used \cref{lem:BonamiLemma}, and in the second line that $l^l\leq e^l l!,$ and in the third line that, by assumption $\|\Delta H\|_{\bar F}\le 15\eps$, so that $et\norm{\Delta H}_{\bar F}\leq e^{-2}$. Thus, we have shown that 
    \begin{align*}
        \left|u_{I^{\otimes n}}-\left(1-\frac{(t\norm{\Delta H}_{\bar F})^2}{2}\right)\right|\leq Ce^3(t\norm{\Delta H}_{\bar F})^3.
    \end{align*}
    Now, as $Ce^3t\norm{\Delta H}_{\bar F}\leq 1/4$, we have that 
    \begin{align*}
        1-\frac{3}{4}(t\norm{\Delta H}_{\bar F})^2\leq |u_{I^{\otimes n}}|\leq 1-\frac{1}{4}(t\norm{\Delta H}_{\bar F})^2.
    \end{align*}
    Taking squares we arrive at 
    \begin{align*}
        1-\frac{3}{2}(t\norm{\Delta H}_{\bar F})^2+\frac{9}{16}(t\norm{\Delta H}_{\bar F})^4\leq |u_{I^{\otimes n}}|^2\leq 1-\frac{1}{2}(t\norm{\Delta H}_{\bar F})^2+\frac{1}{16}(t\norm{\Delta H}_{\bar F})^4.
    \end{align*}
    Hence, recalling that $t\norm{\Delta H}_{\bar F}\leq 1$ we conclude that 
    \begin{align*}
        1-\frac{3}{2}(t\norm{\Delta H}_{\bar F})^2\leq |u_{I^{\otimes n}}|^2\leq 1-\frac{1}{4}(t\norm{\Delta H}_{\bar F})^2.
    \end{align*}
    Hence, we have that 
    \begin{align*}
        \norm{\Delta H}_{\bar F}< \eps\quad &\implies \quad |u_{I^{\otimes n}}|^2\geq 1-\frac{3}{2}\frac{1}{(60e^3C)^2}=1-\frac{1}{2400e^6C^2},\\
        \norm{\Delta H}_{\bar F}> 12\eps\quad &\implies \quad |u_{I^{\otimes n}}|^2\leq 1-\frac{1}{4}\frac{12^2}{(60e^3C)^2}=1-\frac{24}{2400e^6C^2}.
    \end{align*}
    Thus, since $|v'_{I^{\otimes n}}|^2$ is a $(1/2400e^6C^2)$-estimate of $|u_{I^{\otimes n}}|^2$, then 
    \begin{align*}
        \norm{\Delta H}_{\bar F}< \eps\quad &\implies |v'_{I^{\otimes n}}|^2\geq 1-\frac{2}{2400e^6C^2}\quad \implies \quad \text{we output ``CLOSE'',}
        \\
        \norm{\Delta H}_{\bar F}> 12\eps\quad &\implies \quad |v_{I^{\otimes n}}'|^2\leq 1-\frac{23}{2400e^6C^2}\quad \implies \quad \text{we output ``FAR'',}
    \end{align*}
    as desired. 
    
    \textbf{Complexity analysis.} By \cref{lem:memorylessPaulisampling}, we need to make  $O\big(\log(1/\delta)\big)$ queries to $V$, where each query corresponds to $l=O\big((C_{\op}/\eps)^{3/2}\big)$ queries to the Hamiltonian evolution $H$ at time resolution $\Omega\Big(\eps^{1/2}/C_{\operatorname{op}}^{3/2}\Big)$ by virtue of \Cref{theo:trotterization}. Hence, the total number of queries to $H$ is $O((C_{\op}/\eps)^{3/2}\cdot \log(1/\delta))$, the total time evolution required then scales like $O(\eps^{-1}\log(1/\delta))$. %\textcolor{red}{classical post-processing correct? convince myself of SPAM robustness}

    % \textcolor{red}{=========}
    
    %  For the total evolution time, by virtue of \cref{theo:trotterization}, we may assume that we have access to the time evolution of $\Delta H$. Thanks to \cref{lem:memorylessPaulisampling}, we just need to make $O(\log(1/\delta))$ queries to $e^{-it\Delta H}$ for $t=\Theta(1/\eps)$, so the total time evolution is $O(1/\eps).$ We also note that as the algorithm of \cref{lem:memorylessPaulisampling} in our case is robust to an amount of $(1/600e^6C^2)/3=\Theta(1)$ SPAM error. It is also robust to that amount of error in the implementation of $U$, and by \cref{theo:trotterization} we can implement $U$ with that amount of error with $O((\max\{\norm{H}_{\op},\norm{H_0}_{\op}\}/\eps)^{3/2})$ queries to time evolution operator of $H$ at time $\Theta(\eps^{1/2}/(\max\{\norm{H}_{\op},\norm{H_0}_{\op}\})^{3/2})$. Then, the claimed number of queries and the time resolution follow. The classical post-processing time and the number of ancilla qubits follow from \cref{lem:memorylessPaulisampling}.
\end{proof}

Our first main result concerning the optimal certification of quantum Ising Hamiltonians follows by iterating \Cref{alg:quantum_data1}.

\begin{algorithm}[H]
\caption{Hamiltonian certification via time-evolution}\label{alg:quantum_data2}

\begin{algorithmic}[1]
\Require Time evolution access to $H_0$ and $H$ with $\norm{H_0}_{\bar F},\norm{H}_{\bar F}\leq C_{\bar F}$ and $\norm{H_0}_{\op },\norm{H}_{\op}\leq C_{\op}$,  parameters $\delta\in (0,1)$ and $\eps\in (0,C_{\bar F})$% with $\|H_0\|_{\bar F}, \|H\|_{\bar F}\le C_{\bar F}$%, unitary $V$ from \Cref{theo:trotterization}, with $\eps_{\operatorname{Trott}}=\tfrac{1}{9600 e^6 C^2}$ and $t=1/(60\eps e^3 C)$, $C=\sum_{l=0}^\infty (e^{-2})^l$.
\State Set $l=L$, $L=\lceil \log_{15/12}(2C_{\bar F}/15\eps) \rceil$
\State Use \Cref{alg:quantum_data1} with $\eps_l=(15/12)^l\eps$ and $\delta_l=\delta/(L+1)$. 
\If{``FAR''}
  \State \Return ``FAR''
\ElsIf{``CLOSE'' and $l>0$}
  \State Set $l=l-1$ and go back to Step 2.
 \ElsIf{$l=0$}
   \State Terminate and output ``CLOSE''.
\EndIf
\end{algorithmic}
\end{algorithm}

   % \textbf{The algorithm.} We will use \cref{lem:IsingCertificationTimeEvolutionBasisCase} iteratively at most $L=\lceil \log_{15/12}(2C_{\bar F}/15\eps) \rceil$. Let $l=L$. We will iterate the following procedure 
   % \begin{enumerate}
   %     \item Run the algorithm of \cref{lem:IsingCertificationTimeEvolutionBasisCase} with $\eps_l=(15/12)^l\eps$ and $\delta_l=\delta/(L+1)$. 
   %     \begin{enumerate}
   %         \item If it outputs far, then we terminate and output far,
   %         \item If it outputs close and $l>0$ continue and set $l$ to $l-1.$ If $l=0$, we terminate and output close.
   %     \end{enumerate}
   % \end{enumerate}

\begin{theorem}\label{theo:IsingCertificationTimeEvolution}
    Let $H$ and $H_0$ be $n$-qubit Ising Hamiltonians, where $H_0$ is known and $H$ can be accessed via its time evolution operator. Let $C_{\op}\geq 1$ be such that $\norm{H_0}_{\op},\norm{H}_{\op}\leq C_{\op}$, and let $C_{\bar F}\geq 1$ be such that $\norm{H_0}_{\bar F},\norm{H}_{\bar F}\leq C_{\bar F}$. Let $\delta>0$ and $\eps\in (0,C_{\bar F})$. Then,  \Cref{alg:quantum_data2} uses $\widetilde O(\log(C_{\bar F}/\delta)/\eps)$ total evolution time to test whether $\norm{\Delta H}_{\bar F}\leq \eps$ or $\norm{\Delta H}_{\bar F}\geq 12\eps,$ promised that one of the two is satisfied.

    Furthermore, the algorithm uses no ancilla qubits, it makes $\widetilde O(\log(C_{\bar F}/\eps)\cdot \log(1/\delta))$ experiments, it makes $\widetilde O((C_{\op}/\eps)^{3/2}\cdot \log(C_{\bar F})\cdot \log(1/\delta))$ queries, the time resolution is $\widetilde{\Omega}(\eps^{1/2}/C_{\op}^{3/2})$, the algorithm is robust to a constant amount of SPAM errors, the classical post-processing time is $\widetilde O(\log(C_{\bar F}/\eps)\log(1/\delta))$. %Above $\widetilde{\Omega}$ and $\widetilde{O}$ hides logarithmic dependencies of each of the parameters involved. \fnote{I have added a definition of this notation in the preliminaries}
\end{theorem}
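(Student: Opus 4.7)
My plan is to reduce the general certification problem to the restricted regime handled by \Cref{lem:IsingCertificationTimeEvolutionBasisCase}, which requires the a priori promise $\norm{\Delta H}_{\bar F} \leq 15\eps$, by iterating the subroutine along a geometrically decreasing sequence of accuracy scales $\eps_l = (15/12)^l \eps$, descending from the coarsest level $l = L$ down to $l = 0$. The key design choice $L = \lceil \log_{15/12}(2C_{\bar F}/15\eps)\rceil$ ensures that $15\eps_L \geq 2C_{\bar F} \geq \norm{\Delta H}_{\bar F}$, so the subroutine's input promise is met automatically at the coarsest scale. Crucially, the unconditional guarantee of \Cref{lem:IsingCertificationTimeEvolutionBasisCase}, which holds even when the promise is violated, says that a ``CLOSE'' answer at scale $\eps_l$ certifies $\norm{\Delta H}_{\bar F} \leq 12\eps_l = 15\eps_{l-1}$, which is precisely the promise needed at the next finer scale $\eps_{l-1}$; a ``FAR'' answer, on the other hand, certifies $\norm{\Delta H}_{\bar F} \geq \eps_l \geq \eps$ and can be returned immediately.

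\textbf{Correctness.} The analysis is a downward induction on $l$ whose invariant at entry to level $l$ is the promise $\norm{\Delta H}_{\bar F} \leq 15 \eps_l$. Setting $\delta_l = \delta/(L+1)$ in each call and taking a union bound over the at most $L+1$ invocations keeps the overall failure probability below $\delta$. Under the theorem's promise $\norm{\Delta H}_{\bar F} \leq \eps$, every subroutine call has $\norm{\Delta H}_{\bar F} < \eps_l$ and hence outputs ``CLOSE'' with high probability, so the algorithm terminates at $l = 0$ with ``CLOSE''. Under the opposite promise $\norm{\Delta H}_{\bar F} \geq 12\eps$, the finest-scale call at $l = 0$ has $\norm{\Delta H}_{\bar F} \geq 12\eps_0$ and must output ``FAR'' with high probability, so the algorithm returns ``FAR'' at $l = 0$ (or earlier).

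\textbf{Complexity.} Each call at level $l$ contributes $O(\log(1/\delta_l)/\eps_l)$ total evolution time by \Cref{lem:IsingCertificationTimeEvolutionBasisCase}. Since $1/\eps_l = (12/15)^l/\eps$ is a geometric sequence with ratio less than one, the sum over $l \in \{0,\dots,L\}$ is dominated by the $l = 0$ term and yields the claimed $\widetilde O(\log(C_{\bar F}/\delta)/\eps)$. The same geometric argument handles the query count $\sum_l O((C_{\op}/\eps_l)^{3/2}\log(1/\delta_l))$, and the time resolution follows as a minimum over levels, attained at $l = 0$. The number of experiments and classical post-processing time gain a factor of $L+1 = O(\log(C_{\bar F}/\eps))$ from the iteration. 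The zero-ancilla feature and SPAM robustness are inherited unchanged from the subroutine.

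\textbf{Main obstacle.} The main subtlety is that at intermediate scales the theorem's global dichotomy promise may not translate into a local dichotomy: when $\norm{\Delta H}_{\bar F}$ lies in the window $(\eps_l, 12\eps_l)$ for some intermediate $l$, the subroutine is free to return either answer. What makes the inductive scheme work is precisely the two-sided, \emph{unconditional} conclusion of \Cref{lem:IsingCertificationTimeEvolutionBasisCase}, namely that ``CLOSE'' always forces $\norm{\Delta H}_{\bar F}\leq 12\eps_l$ and ``FAR'' always forces $\norm{\Delta H}_{\bar F}\geq \eps_l$, independently of whether the promise is satisfied. Verifying that this suffices to preserve the invariant through all levels, and that premature ``FAR'' outputs at coarser scales do not produce false rejects in the close case, is the one place where the argument needs care rather than routine book-keeping.
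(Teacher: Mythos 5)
Your proposal is correct and follows essentially the same route as the paper's proof: a downward induction over the scales $\eps_l=(15/12)^l\eps$ maintaining the invariant $\norm{\Delta H}_{\bar F}\leq 15\eps_l$ via the unconditional ``CLOSE''/``FAR'' guarantees of \cref{lem:IsingCertificationTimeEvolutionBasisCase}, with a union bound over the $L+1$ calls and the same complexity accounting (your geometric-sum bookkeeping is in fact slightly sharper than the paper's crude $L$-fold bound, but yields the same $\widetilde O$ statements).
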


\begin{proof}
  As above, we first show correctness, then perform a complexity analysis.

   \textbf{Correctness analysis.} In the iteration with $l=L$ we have that $\eps_l\ge 2C_{\bar F}/15,$ so $\norm{\Delta H}_{\bar F}\leq 2C_{\bar F}\le 15\eps_l$. Thus, by \cref{lem:IsingCertificationTimeEvolutionBasisCase} with probability $\geq 1-\delta/(L+1)$ we have the following: On the one hand, if the output of \Cref{alg:quantum_data1} on that iteration is ``FAR'', then $\norm{\Delta H}_{\bar F}\geq \eps_l=(15/12)^l\eps\geq (15/12)\eps$, so we are correct if we terminate and output ``FAR''. On the other hand, if the output of \Cref{alg:quantum_data1} on that iteration is ``CLOSE'', then $\norm{\Delta H}_{\bar F}\leq 12 \eps_l=12\cdot (15/12)^l\eps\leq 15\eps_{l-1}$, so we are in conditions of applying \Cref{alg:quantum_data1} with the parameters $\eps_{l-1},\delta_{l-1}.$ We can iterate this argument up to the iteration with $l=0$. If we arrive at the iteration of $l=0$, then we know that $\norm{\Delta H}_{\bar F}\leq 15\eps$, so this iteration of \Cref{alg:quantum_data1} will output the correct answer. Finally, we note that as every iteration succeeds with $1-\delta/(L+1)$ and there is at most $L+1$ iterations, we have that the algorithm succeeds with probability $\geq 1-\delta.$

   \textbf{Complexity analysis.} The complexity analysis follows from the fact that we just have to run \Cref{alg:quantum_data1} for $L=O(\log(C_{\bar F}/\eps))$ times with parameters $\eps'=\Omega(\eps)$ and $\delta'=\delta/(L+1)=\Omega(\delta/\log(C_{\bar F}/\eps))$.

  %\textcolor{red}{check SPAM robustness and classical postprocessing} 
\end{proof}
  \begin{remark}\label{rem:wheretheproofbreaks}
      Our proof technique breaks down when considering $k$-local Hamiltonians for $k>2.$ In that case, instead of \cref{eq:wheretheproofbreaks} we would have 
      \begin{align}\label{eq:proofbreaking}
            \sum_{l=3}^\infty \frac{t^l}{l!}\frac{\Tr[|\Delta H|^l]}{2^n}&\leq \sum_{l=3}^\infty \frac{t^l}{l!}\left(l^{k/2}\left(\frac{\Tr[\Delta H^2]}{2^n}\right)^{1/2}\right)^{l}=\sum_{l=3}^\infty (t\norm{\Delta H}_{\bar F})^l\frac{l^{lk/2}}{l!}.
    \end{align}
    %\textcolor{red}{ao : do we have the reverse inequality, this will make the argument clearer.}
    However, for $k>2$ we have that $l^{lk/2}/l!=\Omega(l^{l/2})$, so the series on the right-hand side is lower bounded as 
    \begin{align*}
        \sum_{l=3}^\infty (t\norm{\Delta H}_{\bar F})^l\frac{l^{lk/2}}{l!}\geq \sum_{l=3}^\infty (t\norm{\Delta H}_{\bar F}l^{1/2})^l,
    \end{align*}
    which diverges. Thus, \cref{eq:proofbreaking} becomes $$ \sum_{l=3}^\infty \frac{t^l}{l!}\frac{\Tr[|\Delta H|^l]}{2^n}\leq \infty, $$ which is meaningless.
  \end{remark}

\section{Learning and certifying Gibbs states}
\subsection{Learning Gibbs states}
In this section we propose a fully-sample-efficient protocol to learn Gibbs states, i.e., an algorithm whose sample-complexity is at most polynomial in all relevant parameters. First, we show that the following set is an $\eps$-covering net for the set of Gibbs states coming from a $k$-local Hamiltonian with bounded Pauli coefficients:
\begin{equation}
    \mathcal S_{\eps,k,n,\beta}=\left\{ e^{-\beta H}/\Tr[e^{-\beta H}]\ :\ H\in \mathcal H_{\eps,k,n,\beta} \right\},
\end{equation}
where 
\begin{equation*}
    \mathcal H_{\eps,k,n,\beta}=\left\{H: H=\sum_{|P|\leq k} h_{P}P, \; h_{P}\in \eta\mathbb{Z}\cap [-1,1]\right\}
\end{equation*}
and $\eta=\eta_{\eps,k,n,\beta} = \eps/(200\beta n^k)$. %\mnote{As we later take $\max\{\beta, 1\}$, should we also have that maximum in the definition of $\eta$?}
\begin{lemma}\label{lem:net}
    Let $H$ be a $k$-local Hamiltonian acting on $n$ qubits with $|h_P|\leq 1$ for every $P\in\{I,X,Y,Z\}^{\otimes n}$. Then, there exists $\rho\in \mathcal S_{\eps,k,n,\beta}$ such that $\norm{\rho(\beta)-\rho}_{\tr}\leq \eps.$
\end{lemma}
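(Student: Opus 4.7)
The plan is to construct the approximating Gibbs state by rounding the Pauli coefficients of $H$ to the grid $\eta\mathbb{Z}\cap[-1,1]$ and then applying the Lipschitz-type bound \eqref{eq:trnormforGibbs1} from \cref{lem:dSKLGibbsStates}.

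More precisely, given a $k$-local Hamiltonian $H=\sum_{|P|\leq k}h_P P$ with $|h_P|\leq 1$, I would define $H'=\sum_{|P|\leq k}h'_P P$ by setting $h'_P$ to be the nearest element of $\eta\mathbb{Z}\cap[-1,1]$ to $h_P$, where $\eta=\eps/(200\beta n^k)$. Since $|h_P|\leq 1$, such a nearest grid point exists and satisfies $|h_P-h'_P|\leq \eta/2\leq \eta$. By construction, $H'\in\mathcal{H}_{\eps,k,n,\beta}$, and therefore the Gibbs state $\rho:=e^{-\beta H'}/\Tr[e^{-\beta H'}]$ lies in $\mathcal{S}_{\eps,k,n,\beta}$.

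It then remains to bound $\|\rho(\beta)-\rho\|_{\tr}$. For this I would simply invoke \cref{eq:trnormforGibbs1} from \cref{lem:dSKLGibbsStates} applied to $H$ and $H'$, which yields
\begin{equation*}
    \|\rho(\beta)-\rho\|_{\tr}\leq 200\beta n^k\sup_{|P|\leq k}|h_P-h'_P|\leq 200\beta n^k\cdot \eta = \eps,
\end{equation*}
as required.

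There is no real obstacle: the heavy lifting has already been done in \cref{lem:dSKLGibbsStates}, which translates closeness of Pauli coefficients (in $\ell_\infty$) into closeness of the corresponding Gibbs states in trace norm. The only things to verify are that the grid $\eta\mathbb{Z}\cap[-1,1]$ is fine enough (which is exactly why $\eta$ was chosen as $\eps/(200\beta n^k)$) and that $H'$ still satisfies $|h'_P|\leq 1$ so that $H'\in\mathcal{H}_{\eps,k,n,\beta}$; both hold by construction of the rounding. Thus the proof reduces to a one-line net argument combined with the previously established bound.
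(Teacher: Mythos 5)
Your proposal is correct and follows essentially the same route as the paper: round each Pauli coefficient to the nearest point of $\eta\mathbb{Z}\cap[-1,1]$ and apply \cref{eq:trnormforGibbs1} of \cref{lem:dSKLGibbsStates} with $\eta=\eps/(200\beta n^k)$. No gaps.
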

\begin{proof}
    Given $P\in\{I,X,Y,Z\}^{\otimes n}$, let $h_P'$ be the element of $\eta\mathbb Z\cap[-1,1]$ that is closest to $h_P$. Let $H'=\sum h_P'P.$ Then, $\rho=e^{-\beta H'}/\Tr[e^{-\beta H'}]$ belongs to $\mathcal S_{\eps,k,n,\beta}.$ Also, by \cref{lem:dSKLGibbsStates} we have that 
    \begin{align*}
    \norm{\rho(\beta)-\rho}_{\tr}=200\beta n^k\max_{|P|\leq k}|h_P-h_P'|\leq 200\beta n^k\eta = \eps\, ,
    \end{align*}
    where we used the choice of $\eta$ in the last step.
\end{proof}

\noindent Next, we introduce some observables whose expected value will allow us to determine which element of the net is closest to the unknown state. Note that $$|\mathcal H_{\eps,k,n,\beta}|=|\mathcal S_{\eps,k,n,\beta}|=(2/\eta)^{O(n^k)}=(n^{k}\beta/\eps)^{O(n^k)},$$ so we can index the elements of both sets with elements of $[(n^{k}\beta/\eps)^{O(n^k)}]$. For any two indices $i,j\in [(n^{k}\beta/\eps)^{O(n^k)}],$ we define the observable $\Delta H_{i,j}=H_i-H_j.$ First, we bound the number of copies needed to estimate the expected values of all the observables $\Delta H_{i,j}$ in an unknown state $\rho$.

\begin{lemma}\label{lem:observablesfornetlearning}
    Let $\rho$ be an $n$-qubit state, and let $\eps',\tilde \eps,\delta>0.$ Then, with $O(3^kn^{2k}k\log(n/\delta)/\tilde \eps^2)$ single copies of $\rho$ one can obtain estimates $ \Delta H_{i,j,\rho}'$ such that, with probability $\geq 1-\delta,$
    $$|\Delta H_{i,j,\rho}'-\Tr[\rho\Delta H_{i,j}]|\leq \tilde\eps$$ holds simultaneously for every pair of Hamiltonians $H_i,H_j$ belonging to $\mathcal{H}_{ \eps',n,k,\beta}$. The classical post-processing time is $(n^{k}\beta/\eps')^{O(n^k)}/\widetilde\eps^2.$
\end{lemma}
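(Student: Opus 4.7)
The plan is to invoke the Clifford shadow tomography of \cref{theo:Shadows} at a suitably tuned precision and then linearly combine the resulting Pauli-coefficient estimates to build estimators for all the quantities $\Tr[\rho \Delta H_{i,j}]$ simultaneously. The key observation is that every $\Delta H_{i,j}$ is a sum of $k$-local Pauli operators with coefficients bounded by $2$ in absolute value, so uniformly accurate shadow estimates of all $k$-local Pauli coefficients of $\rho$ will transfer, via a triangle-inequality bound and the count of $k$-local Paulis from \cref{eq:numberofklocalPaulis}, into uniformly accurate estimates of all the $\Tr[\rho \Delta H_{i,j}]$.

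Concretely, I would first apply \cref{theo:Shadows} at precision $\tilde\eps/(200 n^k)$ and failure probability $\delta$. By the sample complexity in that theorem, this uses
\[
    N = O\!\left(\frac{3^k\,n^{2k}\, k\log(n/\delta)}{\tilde\eps^2}\right)
\]
single copies of $\rho$ and produces, with probability at least $1-\delta$, estimates $\widetilde\rho_P$ satisfying $2^n|\rho_P-\widetilde\rho_P|\le \tilde\eps/(200 n^k)$ for every $|P|\le k$ simultaneously. For each pair of indices $(i,j)$, I would then set
\[
    \Delta H'_{i,j,\rho} := \sum_{|P|\le k}(h_{i,P}-h_{j,P})\,2^n\,\widetilde\rho_P.
\]
Since $\Tr[\rho\Delta H_{i,j}]=\sum_{|P|\le k}(h_{i,P}-h_{j,P})\,2^n\rho_P$ by the Pauli expansion of $\Delta H_{i,j}$, the triangle inequality together with $|h_{i,P}-h_{j,P}|\le 2$ (as each coefficient lies in $[-1,1]$) and the bound $100 n^k$ on the number of $k$-local Pauli strings from \cref{eq:numberofklocalPaulis} yields
\[
    \bigl|\Delta H'_{i,j,\rho}-\Tr[\rho\Delta H_{i,j}]\bigr| \;\le\; 2\cdot 100\, n^k\cdot \frac{\tilde\eps}{200 n^k}\;=\;\tilde\eps,
\]
uniformly over all admissible pairs $(i,j)$, as required.

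For the classical post-processing bound, \cref{theo:Shadows} computes all the $\widetilde\rho_P$ in time $O((3n)^k\,k\log(n/\delta)\,(200 n^k)^2/\tilde\eps^2)=\poly(n^k)/\tilde\eps^2$; afterwards, each of the at most $|\mathcal H_{\eps',n,k,\beta}|^2=(n^{k}\beta/\eps')^{O(n^k)}$ pairs requires $O(n^k)$ arithmetic operations to evaluate $\Delta H'_{i,j,\rho}$, yielding the claimed total running time of $(n^{k}\beta/\eps')^{O(n^k)}/\tilde\eps^2$. There is essentially no conceptual obstacle; the only thing to watch is the balance between the per-Pauli shadow precision and the factor $200n^k$ coming from $|h_{i,P}-h_{j,P}|\le 2$ and the Pauli count in \cref{eq:numberofklocalPaulis}, so that the accumulated error over the sum remains below $\tilde\eps$.
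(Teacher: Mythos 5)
Your proposal is correct and follows essentially the same route as the paper's proof: apply the Clifford shadows of \cref{theo:Shadows} at precision $\tilde\eps/(200 n^k)$, form $\Delta H'_{i,j,\rho}$ by linearly combining the estimated Pauli coefficients, and bound the error via the triangle inequality together with the $100n^k$ count from \cref{eq:numberofklocalPaulis}, with the same post-processing accounting. No gaps; the argument matches the paper's.
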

\begin{proof}
    By the classical shadow estimation protocol of \cref{theo:Shadows}, with $O(3^kn^{2k}k\log(n/\delta)/\tilde \eps^2)$ many copies of $\rho$ one can obtain estimates $\rho_P'$ such that, with probability $\geq 1-\delta$, satisfy $$|2^n\rho_P'-2^n\rho_P|= \frac{\widetilde \eps}{200n^k}$$ for every $|P|\leq k.$ We define $\Delta H_{i,j,\rho}'=\sum_{|P|\leq k}((h_i)_P-(h_j)_P)2^n\rho_P'.$ Then, by Plancherel's identity and \cref{eq:numberofklocalPaulis}, we get
    \begin{align*}
        |\Delta H_{i,j,\rho}'-\Tr[\Delta H_{i,j}\rho]|\leq \sum_{|P|\leq k}|(h_i)_P-(h_j)_P|\cdot 2^n|\rho_P-\rho_P'|\leq 200n^k \max_P|2^n\rho_P'-2^n\rho_P|=\tilde\eps.
    \end{align*}
    The classical post-processing time bound to obtain the estimates $\rho'_P$ is $O(3^kn^{3k}k\log(n)/\tilde \eps^2)$, coming from \cref{theo:Shadows}. Once we have the estimates $\rho'_P,$ by \cref{eq:numberofklocalPaulis}, it takes $O(n^k)$ time to compute each $\Delta H'_{i,j}$. Hence, the total post-processing time is $$O((3^kn^{3k}k\log(n)/\tilde \eps^2)+n^k|\mathcal H_{\eps,k,n,\beta}|^2)=O((3^kn^{3k}k\log(n)/\tilde \eps^2))+n^k(n^{k}\beta/\eps')^{O(n^k)}=(n^{k}\beta/\eps')^{O(n^k)}/\widetilde\eps^2 \, .$$ % \mnote{I don't yet fully get the classical post-processing time. Of course, we get the contribution from classical shadows, but then we also need to use those estimates to compute the $\Delta H_{i,j,\rho}'$ That should be $O(\sum_{|P|\leq k} 1)$ many operations for each of the $O(|\mathcal{H}_{\eps',n,k,\beta}|^2)$ pairs, right? However, why should those complexities multiply? I would expect that they just add up. That is, I would expect a total classical post-processing time of $O(3^kn^{\textcolor{red}{4}k}k\log(n)/\tilde \eps^2) + O(|\mathcal{H}_{\eps',n,k,\beta}|^2\cdot (\sum_{|P|\leq k} 1))$.}
\end{proof}

Now, we are ready to present our Gibbs state learning protocol.

\begin{algorithm}[H]
\caption{Gibbs state learning}\label{alg:quantum_data3}

\begin{algorithmic}[1]
\Require $\delta,\eps\in (0,1)$; $O(3^kn^{2k}k\log(n)(\max\{\beta,1\})^2/ \eps^4)$ single copies of $\rho$. Set $\eps'=\frac{\eps^2}{100\max\{\beta,1\}n^k}$.
\State Obtain $(\eps^2/(\max\{\beta,1\}))$-estimates $\Delta H'_{i,j,\rho}$ of $\Tr(\Delta H_{i,j}\rho)$ with probability $\ge 1-\delta$, for pairs $H_{i}$, $H_j$ belonging to $\mathcal H_{\eps',n,k,\beta}$ via the protocol of \cref{lem:observablesfornetlearning}. 
\State Output $\rho'\in \mathcal{S}_{\eps,n.k,\beta}$, where $$\rho'=\text{argmin}_{\tau\in\mathcal{S}_{\eps',n,k,\beta}}\{\max_{i,j}\{|\Delta H_{i,j,\rho}'-\Tr[\Delta H_{i,j}\tau]|\}\}.$$
\end{algorithmic}
\end{algorithm}

\begin{theorem}\label{theo:GibbsStateLearning}
    Let $\rho$ be the Gibbs state at inverse temperature $\beta$ of an $n$-qubit and $k$-local Hamiltonian $H$ with $|h_P|\leq 1$ for every $P.$ Let $\delta,\eps\in (0,1)$. Then, from $O(3^kn^{2k}k\log(n/\delta)(\max\{\beta,1\})^2/ \eps^4)$ single copies of $\rho$, \Cref{alg:quantum_data3} obtains $\rho'\in\mathcal S_{\eps',n,k,\beta}$ such that $\norm{\rho'-\rho}_{\tr}\leq \eps$ with probability $\ge 1-\delta$. The classical post-processing time of the protocol is $(n^{k}\max\{\beta,1\}/\eps)^{O(n^k)}.$
\end{theorem}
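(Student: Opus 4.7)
The plan is to control $\|\rho - \rho'\|_{\tr}$ via the triangle inequality through a bridge state $\rho^* \in \mathcal S_{\eps', n, k, \beta}$ guaranteed by \cref{lem:net} to satisfy $\|\rho - \rho^*\|_{\tr} \leq \eps'$. Writing $\rho^* = e^{-\beta H^*}/\Tr[e^{-\beta H^*}]$ with $H^* = H_{i^*}$ and letting $H' = H_{j^*}$ denote the Hamiltonian corresponding to the output $\rho'$, the Pinsker-type inequality \cref{eq:trnormforGibbs0} of \cref{lem:dSKLGibbsStates} applied to the pair $(\rho^*, \rho')$ gives
\[
\|\rho^* - \rho'\|_{\tr}^2 \leq 2\beta \big(\Tr[\rho^* \Delta H_{j^*, i^*}] - \Tr[\rho' \Delta H_{j^*, i^*}]\big),
\]
so the entire problem reduces to bounding this inner-product difference in terms of the empirical quantities $\Delta H'_{i,j,\rho}$ produced by \cref{lem:observablesfornetlearning}.

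To carry out this bound I insert the estimate $\Delta H'_{j^*, i^*, \rho}$ and split the difference as $(\Tr[\rho^* \Delta H_{j^*, i^*}] - \Delta H'_{j^*, i^*, \rho}) + (\Delta H'_{j^*, i^*, \rho} - \Tr[\rho' \Delta H_{j^*, i^*}])$. For the first bracket, adding and subtracting $\Tr[\rho \Delta H_{j^*, i^*}]$ and applying H\"older together with the uniform estimation guarantee of \cref{lem:observablesfornetlearning} yields a bound of $\|\rho - \rho^*\|_{\tr} \cdot \|\Delta H_{j^*, i^*}\|_{\op} + \tilde\eps \leq 200 n^k \eps' + \tilde\eps$, where the operator-norm bound uses \cref{eq:numberofklocalPaulis} together with $|(h_i)_P|, |(h_j)_P| \leq 1$. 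For the second bracket, the key observation is that $\rho^* \in \mathcal S_{\eps', n, k, \beta}$ is itself a valid candidate in the argmin defining $\rho'$, so by optimality $|\Delta H'_{j^*, i^*, \rho} - \Tr[\Delta H_{j^*, i^*} \rho']| \leq \max_{i,j} |\Delta H'_{i,j,\rho} - \Tr[\Delta H_{i,j} \rho^*]|$, which by an identical triangle-inequality decomposition is also at most $\tilde\eps + 200 n^k \eps'$. Substituting the algorithm's choices $\tilde\eps = \eps^2/\max\{\beta, 1\}$ and $\eps' = \eps^2/(100 \max\{\beta, 1\} n^k)$ makes both terms $O(\eps^2/\max\{\beta, 1\})$, so Pinsker yields $\|\rho^* - \rho'\|_{\tr}^2 = O(\eps^2)$ and the triangle inequality closes the argument up to absolute constants that can be absorbed by rescaling $\eps$ at the outset.

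The sample complexity is inherited verbatim from \cref{lem:observablesfornetlearning} evaluated at $\tilde\eps = \eps^2/\max\{\beta, 1\}$, while the post-processing time is dominated by the exhaustive search over $|\mathcal S_{\eps', n, k, \beta}|^2 = (n^k \beta/\eps)^{O(n^k)}$ pairs. The main conceptual obstacle, which dictates both the fine net parameter and the $\eps^{-4}$ sample scaling, is the two-sided balancing of the approximation error $\eps'$ against the shadow estimation error $\tilde\eps$: one must choose $\eps'$ small enough that the operator-norm cost $200 n^k \eps'$ of substituting $\rho$ by its net approximation $\rho^*$ is absorbed into $\tilde\eps$, while $\tilde\eps$ itself must be of order $\eps^2/\beta$ in order for the Pinsker right-hand side, which enters under a square root, to evaluate to $O(\eps^2)$ rather than $O(\eps)$.
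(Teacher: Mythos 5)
Your proposal is correct and follows essentially the same route as the paper's proof: a net element $\rho^*$ (the paper's $\rho''$) from \cref{lem:net}, uniform shadow-based estimates of $\Tr[\rho\,\Delta H_{i,j}]$ from \cref{lem:observablesfornetlearning}, the argmin-optimality of the output, the Pinsker-type bound \eqref{eq:trnormforGibbs0} applied to the pair of net states, and a final triangle inequality, with the same choices $\tilde\eps=\eps^2/\max\{\beta,1\}$ and $\eps'=\eps^2/(100\max\{\beta,1\}n^k)$ yielding an $O(\eps)$ guarantee up to rescaling. The only difference is cosmetic: you apply the optimality bound directly to the specific pair $(j^*,i^*)$, while the paper bounds the maximum over all pairs for both net states before specializing.
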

\begin{proof}

    We first show correctness, and then perform a complexity analysis. 

    \textbf{Correctness analysis.}  By \cref{lem:net}, there is $\rho''\in\mathcal{S}_{\eps',n,k,\beta}$ such that $\norm{\rho-\rho''}_{\tr}\leq  \frac{\eps^2}{100\max\{\beta,1\}n^k}\le \eps$. In particular, 
    \begin{align}
        \max_{i,j}\{|\Delta H_{i,j,\rho}'-\Tr[\Delta H_{i,j}\rho'']|\}&= \max_{i,j}\{|(\Delta H_{i,j,\rho}'-\Tr[\Delta H_{i,j}\rho])-\Tr[\Delta H_{i,j}(\rho''-\rho)]|\}\nonumber\\
        &\leq \frac{\eps^2}{\max\{\beta,1\}}+\max_{i,j}\norm{\Delta H_{i,j}}_{\op}\norm{\rho''-\rho}_{\tr}
        \\
        &\leq \frac{\eps^2}{\max\{\beta,1\}}+200n^k\cdot \frac{\eps^2}{100\max\{\beta,1\}n^k}\leq 3\frac{\eps^2}{\beta},\label{eq:auxilary}
    \end{align}
    where in the second line we have used the guarantees of \cref{theo:Shadows}, and in the third line we have used that $\norm{\Delta H_{i,j}}_{\op}\leq 200n^k$ because of \cref{eq:numberofklocalPaulis}. Thus, by definition of $\rho'$, we also have 
    \begin{equation}\label{eq:auxiliary2}
        \max_{i,j}\{|\Delta H_{i,j,\rho}'-\Tr[\Delta H_{i,j}\rho']|\}\leq 3\frac{\eps^2}{\beta}.
    \end{equation}
    
\noindent Now, we are ready to upper bound the trace distance between $\rho'$ and $\rho$. By the triangle inequality we have that 
    \begin{align*}
        \norm{\rho-\rho'}_{\tr}\leq \norm{\rho-\rho''}_{\tr}+\norm{\rho'-\rho''}_{\tr}\leq \eps+\norm{\rho'-\rho''}_{\tr}.
    \end{align*}
    By \cref{lem:dSKLGibbsStates}, Equation \eqref{eq:trnormforGibbs0}, we further have that 
    \begin{align*}
        \norm{\rho'-\rho''}_{\tr}&\leq \sqrt{2\beta \Tr[\Delta H_{l_1,l_0}(\rho'-\rho'')]},
    \end{align*}
    where $l_0$, resp. $l_1$, is the label of the Hamiltonian $H_{l_0}$, resp. $H_{l_1}$, corresponding to state $\rho'$, resp. $\rho''$. Next, we apply \cref{eq:auxilary} and \cref{eq:auxiliary2} and get 
    \begin{equation*}
        \norm{\rho-\rho'}_{\tr}\leq \eps+\sqrt{4\times 3\eps^2}\le 5\eps.
    \end{equation*}
The bound claimed in the statement of the theorem follows up to constant rescaling.
    
    \textbf{Complexity analysis.} The complexities follow from applying \cref{lem:observablesfornetlearning} with $$\eps'=\eps^2/(100\max\{\beta,1\}n^k)\qquad \text{ and } \qquad \tilde\eps=\eps^2/\max\{\beta,1\}.$$
\end{proof}

\subsection{Certifying Gibbs states}
In this section we propose a fully-efficient protocol to certify Gibbs states, i.e., an algorithm whose sample-complexity and time-complexity are both at most polynomial in all relevant parameters. 
\begin{theorem}\label{theo:certifyingGibbsStates}
    Let $\rho$ and $\rho_0$ be the Gibbs states at inverse temperature $\beta$ of $n$-qubit and $k$-local Hamiltonians $H$ and $H_0$ with $|h_P|,|(h_0)_{P}|\leq 1$ for every $P,$ respectively. Assume that $H_0$ is known. Let $\delta,\eps\in (0,1)$. Then, \Cref{alg:quantum_data4} decides, with success probability $\geq 1-\delta$, whether $\norm{\rho-\rho_0}_{\tr}\leq \eps^2/(400\beta n^k)$ or $\norm{\rho-\rho_0}_{\tr}\geq 2\eps$ with $$O
    \left(\frac{\beta^2n^{2k}3^kk\log(n/\delta)}{\eps^4}\right)$$  single copies of $\rho$ and $\rho_0$. Moreover, the protocol only requires Pauli measurements, and a classical post-processing time of order $O
    \left(\beta^2n^{3k}3^kk\log(n/\delta)/\eps^4\right)$. The same conclusion holds if $\rho$ and $\rho_0$ are both unknown and we are given copy access to both.\footnote{As for certain regime it happens that $\eps^2/(400\beta n^k)\geq  2\eps$, it may seem that the \emph{far} and \emph{close} can overlap, and thus that the testing task is not well-defined. However, this is not the case, because for that regime of parameters the \emph{far} hypothesis cannot occur. Indeed, by \cref{lem:dSKLGibbsStates} we have that $\norm{\rho(\beta)-\rho_0(\beta)}_{\tr}\leq \sqrt{2\beta \norm{H-H_0}_{\op}\norm{\rho(\beta)-\rho(\beta)}_{\tr}}$. Then, as $\norm{H-H_0}_{\tr}\leq 200 n^k$, because $|h_P|,|(h_0)_P|\leq 1$, we have that $\norm{\rho(\beta)-\rho_0(\beta)}_{\tr}\leq  400\beta n^k$. For the parameters such that $\eps^2/(400\beta n^k)\geq  2\eps$ we then have that $\norm{\rho(\beta)-\rho_0(\beta)}_{\tr}\leq \eps/2.$}
\end{theorem}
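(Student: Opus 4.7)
The plan is to reduce Gibbs state certification in trace norm to the estimation of all $k$-local Pauli coefficients of $\rho-\rho_0$, which can be performed in a sample-efficient manner via classical shadow tomography. The key tool is \cref{eq:trnormforGibbs2} of \cref{lem:dSKLGibbsStates}, which relates the trace distance between two Gibbs states to the largest deviation of their $k$-local Pauli coefficients in the rescaled sense $2^n|\rho_P-(\rho_0)_P|$.

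Let $M:=\max_{|P|\leq k}2^n|\rho_P-(\rho_0)_P|$. First I would translate both hypotheses into well-separated conditions on $M$. In the FAR regime, $\norm{\rho-\rho_0}_{\tr}\geq 2\eps$ combined with \cref{eq:trnormforGibbs2} gives $4\eps^2\leq 400\beta n^k M$, i.e., $M\geq \eps^2/(100\beta n^k)$. In the CLOSE regime, trace-norm duality $|\Tr[P(\rho-\rho_0)]|\leq \norm{P}_{\op}\norm{\rho-\rho_0}_{\tr}$ with $\norm{P}_{\op}=1$ yields $M\leq \norm{\rho-\rho_0}_{\tr}\leq \eps^2/(400\beta n^k)$. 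These two thresholds differ by a factor of $4$, so it suffices to approximate $M$ up to additive error $\tilde\eps:=\eps^2/(800\beta n^k)$ and output according to whether the estimate $\tilde M$ lies above or below a threshold strictly between the two values.

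The algorithm then proceeds by invoking \cref{theo:Shadows} on single copies of $\rho$ with target precision $\tilde\eps$ on each $k$-local Pauli coefficient and failure parameter $\delta$; this produces estimates $\tilde\rho_P$ satisfying $2^n|\tilde\rho_P-\rho_P|\leq \tilde\eps$ for all $|P|\leq k$ with probability $\geq 1-\delta$. The quantities $(\rho_0)_P$ are obtained either by classical computation from the known $H_0$ or, if $\rho_0$ is given only via copies, by running the same shadow protocol on $\rho_0$ (at most doubling the sample and time complexities). One then forms $\tilde M:=\max_{|P|\leq k}2^n|\tilde\rho_P-(\rho_0)_P|$ and thresholds; a triangle inequality gives $|\tilde M-M|\leq\tilde\eps$, which is enough to separate the two cases. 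Substituting $\tilde\eps=\Theta(\eps^2/(\beta n^k))$ into the bounds of \cref{theo:Shadows} yields sample complexity $O(3^kk\beta^2n^{2k}\log(n/\delta)/\eps^4)$ and classical post-processing time $O(3^kk\beta^2n^{3k}\log(n/\delta)/\eps^4)$, using only single-copy Pauli measurements, matching the claim.

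There is no serious obstacle: the heavy lifting is already packaged in \cref{lem:dSKLGibbsStates} and \cref{theo:Shadows}. The only subtleties are (i) checking that the uniform per-coordinate estimation error transfers to $M$ with the same constant, which is immediate from the triangle inequality for $|\cdot|$ and $\max$, and (ii) observing, as in the footnote, that in the parameter regime where $\eps^2/(400\beta n^k)\geq 2\eps$ the FAR branch is automatically vacuous by \cref{eq:trnormforGibbs1}, so no separate argument is required there.
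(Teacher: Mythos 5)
Your proposal is correct and follows essentially the same route as the paper: classical shadows with per-coefficient precision $\Theta(\eps^2/(\beta n^k))$, the bound \eqref{eq:trnormforGibbs2} to show the far case forces a large $k$-local Pauli deviation, and trace-norm duality to show the close case keeps all deviations small, then thresholding the maximum estimated deviation. The only cosmetic difference is that you phrase the test as estimating $M=\max_{|P|\le k}2^n|\rho_P-(\rho_0)_P|$ up to additive error and thresholding, while the paper thresholds the estimated coefficients directly at $3\eps^2/(400\beta n^k)$; the constants and complexities coincide.
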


\noindent Since the situation where $\rho$ and $\rho_0$ are both unknown is strictly harder than the case of a known $\rho_0$, we only treat the former. 

\begin{algorithm}[H]
\caption{Gibbs state certification}\label{alg:quantum_data4}
\begin{algorithmic}[1]
\Require $O
    \left(\frac{\beta^2n^{2k}3^kk\log(n/\delta)}{\eps^4}\right)$ single copies of $\rho$, $\delta,\eps\in (0,1)$.
\State Obtain estimates $\rho'_P$ and $(\rho_0)'_P$ such that, with probability $\ge 1-\delta$ via the classical shadow tomography protocol of \cref{theo:Shadows}, such that 
\begin{equation}\label{eq:shadowssucceed}
        2^n|\rho_P-\rho_P'|,2^n|(\rho_0)_P-(\rho_0)_P'|\leq \frac{\eps^2}{800\beta n^k},
    \end{equation} 
    for every $|P|\le k$.
\If{there is $|P|\le k$ such that $2^n|\rho_P'-(\rho_0)_{P}'|\geq 3\eps^2/(400\beta n^k)$} \State output ``FAR''.
\Else \State output ``CLOSE''.
\EndIf
\end{algorithmic}
\end{algorithm}

\begin{proof}
    We first show correctness, and then perform a complexity analysis.

    \textbf{Correctness analysis.} Assume that \cref{eq:shadowssucceed} holds. If $\norm{\rho-\rho_0}_{\tr}\leq \eps^2/(400\beta n^k),$ then $$2^n|\rho_P'-(\rho_0)'_P|\leq 2^n|\rho_P'-\rho_P|+2^n|\rho_P-(\rho_0)_P|+2^n|(\rho_0)_P-(\rho_0)_P'|\leq 2\frac{\eps^2}{400\beta n^k},$$
    so we output ``CLOSE'', as desired. 

    On the other hand, assume that $\norm{\rho-\rho_0}_{\tr}\geq 2\eps.$ Then, by \cref{lem:dSKLGibbsStates}
    \begin{align*}
        4\eps^2\leq 400\beta n^k\max_{|P|\leq k}2^n|\rho_P-(\rho_0)_P|.
    \end{align*}
    Now, by \cref{eq:shadowssucceed} we have that 
    \begin{align*}
        3\eps^2\leq 400\beta n^k\max_{|P|\leq k}2^n|\rho_P'-(\rho_0)_P'|.
    \end{align*}
    Hence, there is $|P|\leq k$ such that $2^n|\rho_P'-(\rho_0)_P'|\geq 3\eps^2/(400\beta n^k)$, as desired.

    \textbf{Complexity analysis.} The complexity analysis follows from applying the classical shadow tomography protocol of \cref{theo:Shadows} with error parameter $\eps^2/(800\beta n^k).$
\end{proof}

\bibliographystyle{alphaurl}
\bibliography{Bibliography}
\end{document}